\numberwithin{equation}{section}
\newtheorem{Theorem}{Theorem}[section]
\newtheorem*{Theorem*}{Theorem}
\newtheorem{Corollary}[Theorem]{Corollary}
\newtheorem{Proposition}[Theorem]{Proposition}
 { \theoremstyle{definition}

\newtheorem{Example}[Theorem]{Example}
 }
\begin{document}
\allowdisplaybreaks

\newcommand{\arXivNumber}{2504.08293}

\renewcommand{\PaperNumber}{080}

\FirstPageHeading

\ShortArticleName{Skew Pl\"ucker Relations}

\ArticleName{Skew Pl\"ucker Relations}

\Author{Kazuya AOKAGE~$^{\rm a}$, Eriko SHINKAWA~$^b$ and Hiro-Fumi YAMADA~$^{\rm c}$}

\AuthorNameForHeading{K.~Aokage, E.~Shinkawa and H.-F.~Yamada}

\Address{$^{\rm a)}$~Department of Mathematics, National Institute of Technology, Ariake College,\\
\hphantom{$^{\rm a)}$}~Fukuoka 836-8585, Japan}
\EmailD{\href{mailto:aokage@ariake-nct.ac.jp}{aokage@ariake-nct.ac.jp}}

\Address{$^{\rm b)}$~Mathematical Science Center for Co-creative Society, Tohoku University,\\
\hphantom{$^{\rm b)}$}~Sendai 980-8577, Japan}
\EmailD{\href{mailto:eriko.shinkawa.e8@tohoku.ac.jp}{eriko.shinkawa.e8@tohoku.ac.jp}}

\Address{$^{\rm c)}$~Department of Mathematics, Rikkyo University, Tokyo 171-8501, Japan}
\EmailD{\href{mailto:hfyamada@rikkyo.ac.jp}{hfyamada@rikkyo.ac.jp}}

\ArticleDates{Received April 14, 2025, in final form September 17, 2025; Published online September 30, 2025}

\Abstract{Schur functions satisfy the relative Pl\"ucker relations which describe the projective embedding of the flag varieties and the Hirota bilinear equations for the modified KP hierarchies. These relative Pl\"ucker relations are generalized to the skew Schur functions.}

\Keywords{Pl\"ucker relations; skew Schur functions}

\Classification{05E05; 15A63; 37K10}

\section{Introduction}
Fix non-negative integer $M$, $n$ and $m$ satisfying $n+m\leq M$. Let
$Z={(z_{ij})}{}_{0\leq i<n+m, 0\leq j <M}$~be a~matrix of full rank, with entries in a field of characteristic $0$. Let us denote by \smash{$\xi^{k_0\dots k_{p-1}}_{\ell_{0}\dots \ell_{p-1}}$} the minor determinant consisting of rows $k_0,\dots,k_{p-1}$, and columns $\ell_0,\dots,\ell_{p-1}$. We write \smash{$\xi_{\ell_{0}\dots \ell_{p-1}}=\xi^{0\dots p-1}_{\ell_{0}\dots \ell_{p-1}}$}. These minor determinants satisfy the so-called relative Pl\"ucker relations
\begin{gather*}
\sum_{i=0}^{n+m}{(-1)}^{i}\xi_{k_{0}\dots k_{n-2}\ell_{i}}\xi_{\ell_{0}\dots \widehat{\ell_{i}}\,\dots \ell_{n+m}}=0
\end{gather*}
for arbitrary sequences $(k_0,\dots,k_{n-2})$ and $(\ell_0,\dots,\ell_{n+m})$ of length $n-1$ and $n+m+1$, respectively. These quadratic relations are defining equations of the flag variety. Details of this projective variety is found in the book \cite{gh}.

Since Schur functions have a determinant expression, the Jacobi--Trudi formula, they satisfy the relative Pl\"ucker relations. To be more precise, let $u=(u_1,u_2,\dots)$ be variables. For an~indeterminate $x$, put \smash{$\eta(u,x)=\sum_{j=1}^{\infty}u_jx^j$}, and define polynomials $h_n(u)$ by
\begin{gather*}
{\rm e}^{\eta(u,x)}=\sum_{}h_{n}(u)x^n.
\end{gather*}
The Schur function $S_{\ell_0 \dots \ell_{n-1}}(u)$ is expressed in terms of $h_{n}(u)$ as follows:
\begin{gather*}
S_{\ell_0 \dots \ell_{n-1}}(u)=\det{(h_{\ell_{j}-i}(u))}_{0\leq i,j<n}.
\end{gather*}
If $Z=(h_{j-i}(u))_{0\leq i <n+m, 0\leq j < M}$, then $S_{\ell_0 \dots \ell_{n-1}}(u)=S_{\ell_0 \dots \ell_{n-1}}$.
Here we explain briefly an~apparent connection between the relative Pl\"ucker relations and the Hirota bilinear equations of the modified KP hierarchies (see, for example, \cite{ostt}). The addition formula for the $\tau$-functions of the~modified KP hierarchy yields the following Hirota bilinear equations:
\begin{gather*}
\sum_{i=0}^{n+m}{(-1)}^{i}S_{k_0 k_1 \dots k_{n-2} \ell_i}\biggl(\frac{1}{2}\widetilde{D}\biggr)\cdot S_{\ell_0 \ell_1 \dots \widehat{\ell_i} \dots \ell_{n+m}}\biggl(-\frac{1}{2}\widetilde{D}\biggr)\tau^{[0]}\bullet \tau^{[m]}=0,
\end{gather*}
where $\widetilde{D} = \bigl(D_1, \frac{1}{2}D_2, \dots\bigr)$, $D_j = D_{t_j}$ being Hirota bilinear operator.
Here are two examples. In~the case $m=0$, $n=2$, and $(k_0) = (0)$, $(\ell_0,\ell_1, \ell_2)=(1,2,3)$, the Pl\"ucker relation reads
\begin{gather*}
\xi_{01}\xi_{23} - \xi_{02}\xi_{13} + \xi_{03}\xi_{12} = 0.
\end{gather*}
This is the defining equation of the $4$-dimensional Grassmann variety ${\rm GM}(2,4)$ in the projective space $P^5$.
The corresponding Hirota bilinear equation is
\begin{gather*}
\bigl(D_1^4 - 4D_1D_3 + 3D_2^2\bigr)\tau\bullet \tau =0.
\end{gather*}
The original KP equation
\begin{gather*}
3u_{yy} + (-4u_t + u_{xxx} -6uu_x)_x =0
\end{gather*}
is derived by the change of variables $x=t_1$, $y=t_2$, $t=t_3$ and $u=(\log \tau)_{xx}$.
In the case $m=1$, $n=2$, $(k) = (0)$ and $(\ell_0,\ell_1, \ell_2, \ell_3) = (0,1,2,3)$, the relative Pl\"ucker relation reads
\begin{gather*}
\xi_{01}\xi_{023} - \xi_{02}\xi_{013} + \xi_{03}\xi_{012} = 0.
\end{gather*}
The corresponding Hirota bilinear equation is
\begin{gather*}
\bigl(D_1^2 + D_2\bigr) \tau^{[0]}\bullet \tau^{[1]} = 0.
\end{gather*}
This is the so-called Miura transformation which connects the modified KP equation with the~KP equation.
We remark here that the Hirota bilinear equations for the (modified) KP and (modified) KdV hierarchies arise also in the Fock representations of the Virasoro algebra \cite{asy1,asy2}.

In our recent work \cite{asy3}, we proved a ``differential'' version of the Pl\"ucker relations for Schur functions and Schur Q-functions.
This is motivated by the above mentioned connection with the KP hierarchy. Our proof of differential Pl\"ucker relations is achieved by realizing differential of the Schur functions as skew Schur functions. In the present paper, we prove the relative version of the skew Pl\"ucker relations.

In Section~\ref{section2}, we recall the differential Pl\"ucker relations which appeared in \cite{asy3}.
Section~\ref{section3} is the main part of this note.
Let \smash{$H^{a-1}_{b}$} denote the hook Young diagram \smash{$\bigl(a, 1^b\bigr)$}. We will prove, for an even $m\geq 0$,
\begin{gather*}
\sum_{i=0}^{n+m}{(-1)}^{i}\sum_{(H^{a-1}_{b},H^{c-1}_{d})\in X}S_{k_0k_1\dots k_{n-2},\ell_i/H^{a-1}_{b}}(u)\cdot S_{\ell_0\ell_1\dots\widehat{\ell_i}\dots\ell_{n+m}/H^{c-1}_{d}}(u)=0.
\end{gather*}
Here \smash{$S_{\lambda/\mu}$} denotes the skew Schur function and $X$ denotes the set of the hook partitions of size~$N_{\geq0}$. Various differential Pl\"ucker relations can be derived from these skew Pl\"ucker relations.

Quadratic relations for the (skew) Schur functions are discussed, for example, in \cite{bbt, g, hb}, and references therein.
The formulas proved in the present paper are, apparently, not directly derived from these previous works.
Concrete correspondences with integrable systems, $\tau$-functions and Hirota bilinear equations are to be revealed.

\section{Relative Pl\"ucker relations}\label{section2}
The Schur functions are labeled by the sequences of non-negative integers $L=(\ell_0,\ell_1,\dots,\ell_{n-1})$ of length $n$. Schur functions are alternating in permutations of indices. If $L=(\ell_0,\ell_1,\dots,\ell_{n-1})$ is such that $0\leq\ell_{0}<\ell_{1} <\dots<\ell_{n-1}$, then, putting $\lambda_{i}=\ell_{n-i}-(n-i)$, $i = 1,\dots,n,$ one has a Young diagram $\lambda=(\lambda_1,\lambda_2,\dots,\lambda_{n})$ of length at most $n$. The strictly increasing sequence $(\ell_0,\ell_1,\dots,\ell_{n-1})$ defines the same Young diagram as the sequence $(0,1,\dots,k-1,\allowbreak{\ell_{0}+k}, {\ell_{1}+k},\dots,\ell_{n-1}+k)$. Hence, the Schur functions labeled by these two sequences coincide. For example,
\begin{gather*}
S_{1,3}(u)=S_{0,2,4}(u)=-S_{4,2,0}(u)=-S_{3,1}(u)=-S_{0,1,5,3}(u).
\end{gather*}
Let $K=(k_0,k_1,\dots,k_{n-2})$ and $L=(\ell_{0},\ell_{1},\dots,\ell_{n+m})$ be sequences of length
$n-1$ and $n+m+1$, respectively. By the $m$-relative Pl\"ucker relation, we mean the following:
\begin{gather}
\sum_{i=0}^{n+m}{(-1)}^{i}S_{k_0k_1\dots k_{n-2},\ell_i}(u)\cdot S_{\ell_0\ell_1\dots\widehat{\ell_i}\dots\ell_{n+m}}(u)=0, \label{modi-plu}
\end{gather}
where the symbol \,$\widehat{\ }$\, denotes the deletion. The equation (\ref{modi-plu}) is derived from Laplace expansion of the
$(2n+m) \times (2n+m)$ determinant
\begin{align*}
\left|\begin{array}{cccc|ccc}
h_{k_0} & h_{k_{0}-1} & \hdots & h_{k_{0}-(n-1)} & 0 & \hdots & 0 \\
h_{k_1} & h_{k_{1}-1} & \hdots & h_{k_{1}-(n-1)} & 0 & \hdots & 0 \\
\vdots & \vdots & \ddots & \vdots & \vdots & \ddots & \vdots \\
h_{k_{n-2}} & h_{k_{n-2}-1} & \hdots & h_{k_{n-2}-(n-1)} & 0 &\hdots & 0 \\
h_{\ell_0} & h_{\ell_{0}-1} & \hdots & h_{\ell_{0}-(n-1)} & h_{\ell_{0}} & \hdots & h_{\ell_{0}-(n+m-1)} \\[1mm]\hline
h_{\ell_1} & h_{\ell_{1}-1} & \hdots & h_{\ell_{1}-(n-1)} & h_{\ell_{1}} & \hdots & h_{\ell_{1}-(n+m-1)} \\
\vdots & \vdots & \ddots & \vdots & \vdots & \ddots & \vdots \\
h_{\ell_{n+m}} & h_{\ell_{n+m}-1} & \hdots & h_{\ell_{n+m}-(n-1)} & h_{\ell_{n+m}} & \hdots & h_{\ell_{n+m}-(n+m-1)}
\end{array}\right|
=0.
\end{align*}

For example, if we take $K=(0,1)$, $L=(1,2,3,4,5,6)$, then we have $1$-relative Pl\"ucker relation
\begin{gather*}
S_{0,1,2}(u)\cdot S_{3,4,5,6}(u)-S_{0,1,3}(u)\cdot S_{2,4,5,6}(u)+S_{0,1,4}(u)\cdot S_{2,3,5,6}(u)\\
\qquad{}-S_{0,1,5}(u)\cdot S_{2,3,4,6}(u)+S_{0,1,6}(u)\cdot S_{2,3,4,5}(u)=0.
\end{gather*}

For sequences $L=(\ell_0,\ell_2,\dots,\ell_{n-1})$ and $R=(r_0,r_1,\dots,r_{n-1})$, we define the skew Schur function
\begin{gather*}
S_{L/R}(u)=\det{(h_{\ell_i-r_{j}})}_{0\leq i,j< n}.
\end{gather*}
For example, if we take $L=(\ell_0,\ell_1,\ell_{2},\ell_{3})$ and $R=(r_0,r_1)$, then we think $R=(0,1,r_{0}+2,r_{1}+2)$,
\begin{gather*}
S_{L/R}(u)=
\begin{vmatrix}
h_{\ell_0} & h_{\ell_0-1} & h_{\ell_0-(r_0+2)} & h_{\ell_0-(r_1+2)} \\
h_{\ell_1} & h_{\ell_1-1} & h_{\ell_1-(r_0+2)} & h_{\ell_1-(r_1+2)} \\
h_{\ell_2} & h_{\ell_2-1} & h_{\ell_2-(r_0+2)} &h_{\ell_2-(r_1+2)}\\
h_{\ell_3} & h_{\ell_3-1} & h_{\ell_3-(r_0+2)} & h_{\ell_3-(r_1+2)}
\end{vmatrix}.
\end{gather*}
With respect to the Hall inner product $\langle\ ,\, \rangle$ (cf.\ \cite{mac}), we have
\[
\langle \partial_{u_n} S_{L}(u), S_{K}(u)\rangle =\langle S_{L}(u), p_nS_{K}(u)\rangle,\nonumber
\]
where $p_n$ is the power sum symmetric function of degree $n$.
The following is known as the Frobenius formula \cite{mac}
\begin{gather*}
p_{n}=\sum_{R\in H(n)}{(-1)}^{{\rm leg}(R)}S_{R}(u).\nonumber
\end{gather*}
Here, $H(n)$ denotes the set of sequences that correspond to hook partitions of size $n$, and ${\rm leg}(R)$ denotes the leg length of the corresponding hook partition. We see
\begin{align*}
\langle \partial_{u_n} S_{L}(u), S_{K}(u)\rangle &=\sum_{R\in H(n)}{(-1)}^{{\rm leg}(R)}\langle S_{L}(u),S_{R}(u)S_{K}(u)\rangle\nonumber\\
&=\sum_{R\in H(n)}{(-1)}^{{\rm leg}(R)}\langle S_{L/R}(u),S_{K}(u)\rangle.\nonumber
\end{align*}
Also, we define
\begin{gather*}
S_L\bigl(\tilde{\partial}\bigr)=S_{L}(u)|_{u_j=\frac{1}{j}\partial_{u_j}},
\end{gather*}
where $\tilde{\partial}=\bigl(\partial_{u_1},\frac{1}{2}\partial_{u_2},\frac{1}{3}\partial_{u_3},\dots\bigr)$. We see
\begin{align*}
\bigl\langle S_{R}\bigl(\tilde{\partial}\bigr)S_{L}(u), S_{K}(u)\bigr\rangle =\langle S_{L}(u),S_{R}(u)S_{K}(u)\rangle
=\langle S_{L/R}(u), S_{K}(u)\rangle.
\end{align*}
Hence
\begin{gather*}
\partial_{u_n}S_{L}(u)=\sum_{R\in H(n)}{(-1)}^{{\rm leg}(R)}S_{L/R}(u)\qquad \mbox{and}\qquad S_{R}\bigl(\tilde{\partial}\bigr)S_{L}(u)=S_{L/R}(u).
\end{gather*}
The following differential version of the Pl\"ucker relations are shown in our previous paper \cite{asy3}, by making use of the skew Schur functions.

\begin{Theorem}
Suppose that $N\geq0$. For sequences $(k_{0},k_1,\dots,k_{n-2})$ and $(\ell_0,\ell_1,\dots, \ell_{n})$, we~have
\begin{gather*}
\sum_{i= 0}^{n}{(-1)}^{i}\Biggl(\sum_{\substack{\alpha+\beta=N\\ \alpha\geq 0,\,\beta \geq 0}}\partial_{u_\alpha}S_{k_0 k_1 \dots k_{n-2},\ell_i}(u)\cdot \partial_{u_\beta}S_{\ell_0 \ell_1 \dots \widehat{\ell_i} \dots \ell_{n}}(u) \Biggr)=0.
\end{gather*}
\end{Theorem}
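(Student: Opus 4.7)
The plan is to reduce the differential Pl\"ucker identity to the classical Pl\"ucker relation \eqref{modi-plu} (in the case $m=0$) by realizing each partial derivative $\partial_{u_\alpha}S_L$ as a signed sum of hook-skew Schur functions.

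First, I would apply the identity $\partial_{u_\alpha}S_L(u)=\sum_{R\in H(\alpha)}(-1)^{\mathrm{leg}(R)}S_{L/R}(u)$ displayed just above the theorem (with the convention $H(0)=\{\emptyset\}$, so that $\partial_{u_0}S_L=S_L$) to each of the two factors and interchange summations. The left-hand side of the theorem rewrites as
\begin{gather*}
\sum_{i=0}^{n}(-1)^{i}\sum_{(\mu,\nu)}(-1)^{\mathrm{leg}(\mu)+\mathrm{leg}(\nu)}S_{k_0\dots k_{n-2},\ell_i/\mu}(u)\cdot S_{\ell_0\dots\widehat{\ell_i}\dots\ell_n/\nu}(u),
\end{gather*}
where the inner sum runs over ordered pairs of hook partitions $(\mu,\nu)$ (with $\emptyset$ permitted) satisfying $|\mu|+|\nu|=N$.

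Next I would prove this bilinear skew-Schur identity by a determinantal argument. For a fixed pair of hooks $\mu=(a,1^b)$, $\nu=(c,1^d)$, the Jacobi--Trudi formula writes $S_{K,\ell_i/\mu}$ and $S_{L\setminus\ell_i/\nu}$ as $n\times n$ minors of $h$-matrices with prescribed column shifts coming from arm and leg lengths. Assembling the products of these minors, summed over all pairs of total size $N$, into a single augmented $(2n+N)\times(2n+N)$ block determinant---whose row set is the $n-1$ $k$-rows plus $n+1$ $\ell$-rows of the classical $m=0$ Pl\"ucker determinant, and whose columns are enlarged by $N$ extra columns carrying exactly the shifts generated by the arm and leg lengths---recovers the identity as a Laplace expansion. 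As for \eqref{modi-plu}, subtracting the right block from the left block across the $\ell$-rows exposes a rank defect that forces this augmented determinant to vanish.

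The main obstacle is the combined sign and index bookkeeping: one must choose the shifts of the $N$ extra columns so that the Jacobi--Trudi expansion of each $S_{L/\mathrm{hook}}$ produces precisely the $(-1)^{\mathrm{leg}}$ weight demanded by the Frobenius formula, and so that the summation over all hook pairs $(\mu,\nu)$ of size $N$, together with the alternating $(-1)^i$ over $i$, reassembles into the Laplace expansion of a single determinant. Once the signs are aligned, the vanishing of the augmented determinant by rank deficiency is a direct generalization of the argument that gives \eqref{modi-plu}.
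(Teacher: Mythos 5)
Your first step---expanding each $\partial_{u_\alpha}S$ via the Frobenius formula as a signed sum of hook-skew Schur functions and interchanging the summations---is exactly the reduction the authors use (they state explicitly that the proof is ``achieved by realizing differential of the Schur functions as skew Schur functions''). The gap is in your second step, which is where all the work lies. The paper does \emph{not} assemble the whole sum over hook pairs into one augmented determinant that vanishes by rank deficiency. Instead, for each \emph{fixed} pair of hooks $\bigl(H^{a-1}_{b},H^{c-1}_{d}\bigr)$ the alternating sum over $i$ is written as a single $2n\times 2n$ determinant $\xi\bigl(H^{a-1}_{b},H^{c-1}_{d}\bigr)$ via Laplace expansion (the $m=0$ case of \eqref{skewplu}); these individual determinants are generically \emph{nonzero}, and the theorem follows from explicit linear relations among them --- for $m=0$ the four-term identity obtained from \eqref{abcd1}, namely $\xi\bigl(H^{a-1}_{b},H^{c-1}_{d}\bigr)+\xi\bigl(H^{c-1}_{b},H^{a-1}_{d}\bigr)+\xi\bigl(H^{a-1}_{d},H^{c-1}_{b}\bigr)+\xi\bigl(H^{c-1}_{d},H^{a-1}_{b}\bigr)=0$, together with the degenerate boundary cases --- whose aggregation over all hook pairs of total size $N$, with the Frobenius signs, produces the cancellation. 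Establishing these relations requires the case analysis of Proposition~\ref{abcd}, not a one-line rank argument.

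Concretely, your proposed augmented determinant cannot exist in the form described. The row set you specify ($n-1$ rows from $K$ plus $n+1$ rows from $L$) has $2n$ elements, so a $(2n+N)\times(2n+N)$ matrix with those rows and $N$ extra columns is not square. Moreover, every product appearing in the identity is a product of two $n\times n$ minors, whereas a Laplace expansion of a $(2n+N)\times(2n+N)$ determinant along $n$ rows produces products of $n\times n$ with $(n+N)\times(n+N)$ minors; and the number of hook pairs of total size $N$ does not match the number of column subsets in such an expansion. Finally, the ``subtract the right block from the left block'' rank argument that proves \eqref{modi-plu} relies on the left-block columns of the $\ell$-rows reappearing verbatim in the right block; once the two blocks carry different hook shifts this coincidence fails, which is precisely why each $\xi\bigl(H^{a-1}_{b},H^{c-1}_{d}\bigr)$ survives individually and the cancellation must be organized pairwise across distinct hook pairs. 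So the plan as written would not go through without being replaced by the relation-by-relation analysis the paper actually performs.
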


\section{Skew Pl\"ucker relations}\label{section3}
Denote by $H^{a-1}_{b}$ the hook partition $\bigl(a,1^b\bigr)$ with arm length $a-1$ and leg length $b$.
We write $0$ in place of $H^{-1}_{0}$. For example, $H(0)=\{0\}$, $H(1)=\bigl\{H^{0}_{0}\bigr\}$, $H(2)=\bigl\{H^{1}_{0},H^{0}_{1}\bigr\}$. For the sequences $(k_{0},k_1,\dots,k_{n-2})$ and $(\ell_0,\ell_1,\dots, \ell_{n+m})$, we have
\begin{align}
&\sum_{i=0}^{n+m}{(-1)}^{i}S_{k_0 k_1\dots k_{n-2}\ell_i/{H^{a-1}_{b}}}(u)\cdot S_{\ell_0 \ell_1\dots \widehat{\ell_{i}}\dots\ell_{n+m}/{H^{c-1}_{d}}}(u)\label{skewplu}\\
&=\left|\begin{array}{cccc|cccc}
h_{k_0-r_{0}^{}} & h_{k_0-r_{1}^{}} & \cdots &h_{k_0-r_{n-1}^{}}& 0 & 0 & \cdots & 0\\
h_{k_{1}-r_{0}^{}} & h_{k_{1}-r_{1}^{}} & \cdots &h_{k_{1}-r_{n-1}^{}}& 0 & 0 & \cdots & 0 \\
\vdots & \vdots & \ddots & \vdots & \vdots &\vdots & \ddots & \vdots \\
h_{k_{n-2}-r_{0}^{}} & h_{k_{n-2}-r_{1}^{}} & \cdots &h_{k_{n-2}-r_{n-1}^{}} &0 & 0 & \cdots & 0 \\
h_{\ell_0-r_{0}^{}} & h_{\ell_0-r_{1}^{}} & \cdots &h_{{\ell_0}-r_{n-1}^{}}& h_{\ell_{0}-w_{0}^{}} & h_{\ell_{0}-w_{1}^{}} & \cdots & h_{\ell_{0}-w_{n+m-1}^{}} \\[1mm]\hline
h_{\ell_1-r_{0}^{}} & h_{\ell_1-r_{1}^{}} & \cdots &h_{{\ell_1}-r_{n-1}^{}}& h_{\ell_{1}-w_{0}^{}}& h_{\ell_{1}-w_{1}^{}} & \cdots & h_{\ell_{1}-w_{n+m-1}^{}}\\
\vdots & \vdots& \ddots & \vdots &\vdots &\vdots & \ddots & \vdots \\
h_{\ell_{n+m}-r_{0}^{}} & h_{\ell_{n+m}-r_{1}^{}} & \cdots &h_{{\ell_{n+m}}-r_{n-1}^{}}& h_{\ell_{n+m}-w_{0}^{}} & h_{\ell_{n+m}-w_{1}^{}} & \cdots & h_{\ell_{n+m}-w_{n+m-1}^{}}
\end{array}\right|\!,\nonumber
\end{align}
where
\begin{gather*}
(r_0,r_1,\dots,r_{n-(b+2)},r_{n-(b+1)},\dots,r_{n-2},r_{n-1})\\
\qquad{}=(0,1,\dots,n-(b+2),n-(b+1)+1,\dots,(n-2)+1,(n-1)+a),
\end{gather*}
and
\begin{gather*}
(w_0,w_1,\dots,w_{n+m-(d+2)},w_{n+m-(d+1)},\dots,w_{n+m-2},w_{n+m-1})\\
=(0,1,\dots,n+m-(b+2),n+m-(b+1)+1,\dots,(n+m-2)+1,(n+m-1)+c).
\end{gather*}

Denote the right-hand side of the equation (\ref{skewplu}) by $\xi\bigl(H^{a-1}_{b},H^{c-1}_{d}\bigr)$ and denote by
\[
\xi\bigl(H^{a-1}_{b},H^{c-1}_{d}\bigr)^{i_0 i_1 \dots i_{k-1}}_{j_0 j_1 \dots j_{k-1}}
\]
the $k\times k$ minor determinant of $\xi\bigl(H^{a-1}_{b},H^{c-1}_{d}\bigr)$ consisting of $i_0$-th row, $i_1$-th row, $\dots$ and $j_0$-th column, $j_1$-th column, $\dots$.

By performing column operations on \smash{$\xi\bigl(H^{a-1}_{b},H^{c-1}_{d}\bigr)$}, all entries below the $(n-1)$-th row become zero in some columns.
In practice, for $m\geq0$, column operations on $\xi\bigl(H^{a-1}_{b},H^{c-1}_{d}\bigr)$ lead to at most two non-zero terms
$h_{\ell_{0}-(n+m)+d+1}$ and $h_{\ell_{0}-n+1-a}$ appearing in the $n$-th row. For convenience, we write the determinant \smash{$\xi\bigl(H^{a-1}_{b},H^{c-1}_{d}\bigr)$} only by the indices of~$h$ in the $(n-1)$-th row, though the components in the $(n-2)$-th row should always be considered.
The indices of~$h$ in the $(n-1)$-th row are, from the left,
\begin{align}
&| \underbrace{\ell_{0},\ell_{0}-1,\dots,\ell_{0}-(n-(b+2))}_{n-b-1},\nonumber\\
&\quad{}\underbrace{\ell_{0}-(n-(b+1))-1,\dots,\ell_{0}-(n-2)-1}_{b},\ell_{0}-(n-1)-a,\nonumber\\
&\quad{}\underbrace{\ell_{0},\ell_{0}-1,\dots,\ell_{0}-((n+m)-(d+2))}_{n+m-d-1},\nonumber\\
&\quad{}\underbrace{\ell_{0}-((n+m)-(d+1))-1,\dots,\ell_{0}-((n+m)-2)-1}_{d},\ell_{0}-((n+m)-1)-c |.\label{ind}
\end{align}
In addition, if a certain entry in the $(n-1)$-th row is zero, it is written as $0$ instead of the index of $h$. After performing column operations, it is represented using the indices of $h$ in the $(n-1)$-th row. For example, when $n=3$, $m=1$, $K=(k_0,k_1)$, $L=(\ell_{0},\ell_{1},\ell_{2},\ell_{3},\ell_4)$, $a=2$, $b=2$, $c=2$, $d=1$, we have
\begin{align*}
\xi\bigl(H^{1}_{2},H^{1}_{1}\bigr)&=\sum_{i= 0}^{4}{(-1)}^{i}S_{k_0 k_1 \ell_i/H^{1}_{2}}(u)\cdot S_{\ell_0 \dots \widehat{\ell_i} \dots \ell_{4}/H^{1}_{1}}(u)\nonumber\\
&=\left|\begin{array}{ccc|cccc}
h_{k_0-1} & h_{k_0-1-1} &h_{k_0-2-2} & 0 & 0 & 0 &0\\
h_{k_1-1} & h_{k_1-1-1} &h_{k_1-2-2} & 0 & 0 & 0 &0\\
h_{\ell_0-1} & h_{\ell_0-1-1} &h_{\ell_0-2-2} & h_{\ell_{0}} &h_{\ell_{0}-1} & h_{\ell_{0}-2-1} & h_{\ell_{0}-3-2} \\\hline
h_{\ell_1-1} & h_{\ell_1-1-1} &h_{\ell_1-2-2} & h_{\ell_{1}} &h_{\ell_{1}-1}& h_{\ell_{1}-2-1} & h_{\ell_{1}-3-2}\\
h_{\ell_2-1} & h_{\ell_2-1-1} &h_{\ell_2-2-2} & h_{\ell_{2}} &h_{\ell_{2}-1}& h_{\ell_{2}-2-1} & h_{\ell_{2}-3-2}\\
h_{\ell_3-1} & h_{\ell_3-1-1} &h_{\ell_3-2-2} & h_{\ell_{3}} &h_{\ell_{3}-1}& h_{\ell_{3}-2-1} & h_{\ell_{3}-3-2}\\
h_{\ell_4-1} & h_{\ell_4-1-1} &h_{\ell_4-2-2} & h_{\ell_{4}} &h_{\ell_{4}-1}& h_{\ell_{4}-2-1} & h_{\ell_{4}-3-2}
\end{array}\right|\nonumber\\
&=\left|\begin{array}{ccc|cccc}
h_{k_0-1} & h_{k_0-1-1} &h_{k_0-2-2} & 0 & 0 & 0 &0\\
h_{k_1-1} & h_{k_1-1-1} &h_{k_1-2-2} & 0 & 0 & 0 &0\\
0 & h_{\ell_0-1-1} & h_{k_1-2-2} & h_{\ell_{0}} &h_{\ell_{0}-1} & h_{\ell_{0}-2-1} & h_{\ell_{0}-3-2} \\\hline
0 & h_{\ell_1-1-1} & h_{k_1-2-2} & h_{\ell_{1}} &h_{\ell_{1}-1}& h_{\ell_{1}-2-1} & h_{\ell_{1}-3-2}\\
0 & h_{\ell_2-1-1} & h_{k_1-2-2} & h_{\ell_{2}} &h_{\ell_{2}-1}& h_{\ell_{2}-2-1} & h_{\ell_{2}-3-2}\\
0 & h_{\ell_3-1-1} & h_{k_1-2-2} & h_{\ell_{3}} &h_{\ell_{3}-1}& h_{\ell_{3}-2-1} & h_{\ell_{3}-3-2}\\
0 & h_{\ell_4-1-1} & h_{k_1-2-2} & h_{\ell_{4}} &h_{\ell_{4}-1}& h_{\ell_{4}-2-1} & h_{\ell_{4}-3-2}
\end{array}\right|.\nonumber
\end{align*}
Hence, we have
\begin{align*}
\xi\bigl(H^{1}_{2},H^{1}_{1}\bigr)&=|\ell_{0}-1,\ell_{0}-2,\ell_{0}-4,\ell_{0},\ell_{0}-1,\ell_{0}-3,\ell_{0}-5|\\
&=|0,{\ell_{0}-2},{\ell_0-4},\ell_{0},{\ell_{0}-1},\ell_{0}-3,\ell_{0}-5|.
\end{align*}

It is essential to consider the following points. For example, \smash{$\xi\bigl(H^{2}_{2},H^{0}_{1}\bigr)$} can be rewritten as $ |\ell_{0}-1,\ell_{0}-2,\ell_{0}-5,\ell_{0},\ell_{0}-1,\ell_{0}-3,\ell_{0}-4|=|0,\ell_{0}-2,\ell_0-5,\ell_{0},\ell_{0}-1,\ell_{0}-3,\ell_{0}-4|$. However, \smash{$\xi\bigl(H^{1}_{2},H^{1}_{1}\bigr)$} and \smash{$\xi\bigl(H^{2}_{2},H^{0}_{1}\bigr)$} do not coincide, even when scaled by $\pm 1$. Note that this phenomenon occurs only when exactly two non-zero terms $h_{\ell_{0}-(n+m)+d+1}$ and $h_{\ell_{0}-n+1-a}$, appear in the $n$-th~row. 

In \cite{asy3}, we have presented all the linear relations of $\xi\bigl(H^{a-1}_{b},H^{c-1}_{d}\bigr)$ for $m=0$. Now we examine the case $m\geq1$. From (\ref{ind}), we have
\begin{gather}
\xi\bigl(H^{c-1+m}_{b},H^{a-1-m}_{d}\bigr)\nonumber\\
\qquad=|\underbrace{\ell_{0},\ell_{0}-1,\dots,\ell_{0}-n+b+2}_{n-b-1},
\underbrace{\ell_{0}-n+b,\dots,\ell_{0}-n+1}_{b},\ell_{0}-n-m+1-c,\nonumber\\
\qquad\hphantom{=|}{} \underbrace{\ell_{0},\ell_{0}-1,\dots,\ell_{0}-(n+m)+d+2}_{n+m-d-1},\nonumber\\
\qquad\hphantom{=|}{} \underbrace{\ell_{0}-(n+m)+d,\dots,\ell_{0}-(n+m)+1}_{d},\ell_{0}-n+1-a|,\label{ind1}
\\
\xi\bigl(H^{a-1}_{d-m},H^{c-1}_{b+m}\bigr)\nonumber\\
\qquad=|\underbrace{\ell_{0},\ell_{0}-1,\dots,\ell_{0}-n+d-m+2}_{n+m-d-1},
\underbrace{\ell_{0}-n+d-m,\dots,\ell_{0}-n+1}_{-m+d},\ell_{0}-n+1-a,\nonumber\\
\hphantom{\qquad=|}{} \underbrace{\ell_{0},\ell_{0}-1,\dots,\ell_{0}-n+b+2}_{n-b-1},\nonumber\\
\hphantom{\qquad=|}{} \underbrace{\ell_{0}-n+b,\dots,\ell_{0}-(n+m)+1}_{m+b},\ell_{0}-(n+m)+1-c|\label{ind2}
\end{gather}
and
\begin{gather}
\xi\bigl(H^{c-1+m}_{d-m},H^{a-1-m}_{b+m}\bigr)\nonumber\\
\qquad=|\underbrace{\ell_{0},\ell_{0}-1,\dots,\ell_{0}-n-m+d+2}_{n+m-d-1},
\underbrace{\ell_{0}-n-m+d,\dots,\ell_{0}-n+1}_{-m+d},\nonumber\\
\qquad\hphantom{=|}{}\, \ell_{0}-n-m+1-c,\underbrace{\ell_{0},\ell_{0}-1,\dots,\ell_{0}-n+b+2}_{n-b-1},\nonumber\\
\qquad\hphantom{=|}{} \underbrace{\ell_{0}-n+b,\dots,\ell_{0}-(n+m)+1}_{m+b},\ell_{0}-n+1-a|.\label{ind3}
\end{gather}
\begin{Proposition}\label{00}\samepage
Suppose that $m \geq 1$.
\begin{enumerate}
\item[$(1)$] For $1\leq a\leq m$, we have\
\begin{gather*}
\sum_{i=0}^{n+m}{(-1)}^{i}S_{k_0 k_1\dots k_{n-2}\ell_i/{H^{a-1}_{b}}}(u)\cdot S_{\ell_0 \ell_1\dots \widehat{\ell_{i}}\dots\ell_{n+m}/{(0)}}(u)=0,
\end{gather*}
\item[$(2)$] For $c\geq 1$, $0\leq d \leq m-1$, we have\
\begin{gather*}
\sum_{i=0}^{n+m}{(-1)}^{i}S_{k_0 k_1\dots k_{n-2}\ell_i/{(0)}}(u)\cdot S_{\ell_0 \ell_1\dots \widehat{\ell_{i}}\dots\ell_{n+m}/{H^{c-1}_{d}}}(u)=0.
\end{gather*}
\end{enumerate}
\end{Proposition}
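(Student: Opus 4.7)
The plan is to deduce both parts from the determinantal identity~(\ref{skewplu}) by showing that the corresponding $(2n+m)\times(2n+m)$ determinant on its right-hand side vanishes: for~(1) one shows $\xi\bigl(H^{a-1}_{b},0\bigr)=0$, and for~(2) one shows $\xi\bigl(0,H^{c-1}_{d}\bigr)=0$. Both reductions share the same mechanism: under the stated hypothesis, every column of the left block appears with the same ``shift'' as some column of the right block, so subtracting that right column from the matching left column kills the bottom portion of the left column without touching anything else.

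For~(1), with $1\leq a\leq m$ and the right hook trivial, the left-block column indices are
\[
\{r_0,\dots,r_{n-1}\}=\bigl(\{0,1,\dots,n-1\}\setminus\{n-b-1\}\bigr)\cup\{n-1+a\},
\]
and since $n-1+a\leq n-1+m\leq n+m-1$ every $r_i$ lies in the right-block index set $\{0,1,\dots,n+m-1\}$. In the bottom $n+m+1$ rows the left column at shift $r_i$ and the right column at shift $r_i$ both read $h_{\ell_j-r_i}$; in the top $n-1$ rows they read $h_{k_j-r_i}$ and $0$ respectively. Subtracting each such right column from its matching left column (the $n$ subtractions are independent, as the $r_i$ are distinct) brings the matrix into the block form
\[
\begin{pmatrix}A & 0\\ 0 & B\end{pmatrix},
\]
where $A$ is the $(n-1)\times n$ top--left Jacobi--Trudi block and $B$ is the unchanged $(n+m+1)\times(n+m)$ right block. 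Laplace expansion along the top $n-1$ rows then pairs any $n-1$ of the $n$ left columns with the one omitted left column together with all $n+m$ right columns in the complementary $(n+m+1)\times(n+m+1)$ minor; the omitted left column is now zero throughout the bottom, so the complementary minor has a zero column and vanishes. Every term is zero, giving $\xi\bigl(H^{a-1}_{b},0\bigr)=0$.

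Part~(2) follows by the symmetric argument. The right-block index set is
\[
\{w_0,\dots,w_{n+m-1}\}=\bigl(\{0,\dots,n+m-1\}\setminus\{n+m-d-1\}\bigr)\cup\{n+m-1+c\},
\]
and the hypothesis $0\leq d\leq m-1$ forces $n+m-d-1\geq n$, so the left-block indices $\{0,\dots,n-1\}$ are all still present on the right side. The same column subtractions and Laplace argument then give $\xi\bigl(0,H^{c-1}_{d}\bigr)=0$. The only step requiring genuine attention, and the main place I would check twice, is the combinatorial verification that every left-block shift is also a right-block shift under each hypothesis; once those containments are in hand the proof becomes mechanical, with no sign bookkeeping needed because every Laplace term vanishes by having a zero column in its complementary minor.
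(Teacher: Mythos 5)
Your proposal is correct and follows essentially the same route as the paper: under each hypothesis every left-block column shift occurs among the right-block shifts, so column subtractions annihilate the bottom portion of all $n$ left columns, and the resulting determinant vanishes because its top-left block has only $n-1$ nonzero rows against $n$ columns. The paper records this via its shorthand of tracking the $\ell_0$-row indices and the inequalities $\ell_0-n+1-a\geq\ell_0-n-m+1$ (resp.\ $\ell_0-n-m+d+2\leq\ell_0-n+1$), which are exactly your containment conditions $n-1+a\leq n+m-1$ and $n+m-d-1\geq n$; your explicit Laplace-expansion justification of the final vanishing is a correct spelling-out of what the paper leaves implicit.
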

\begin{proof}
(1) By (\ref{ind}), we have
\begin{gather*}
\xi\bigl(H^{a-1}_{b},0\bigr)=|\underbrace{\ell_{0},\ell_{0}-1,\dots,\ell_{0}-n+b+2}_{n-b-1},\underbrace{\ell_{0}-n+b,\dots,\ell_{0}-n+1}_{b},\ell_{0}-n+1-a,\nonumber\\
\hphantom{\xi\bigl(H^{a-1}_{b},0\bigr)=|}{}
\underbrace{\ell_{0},\ell_{0}-1,\dots,\ell_{0}-n-m+2}_{n+m-1},\ell_{0}-n-m+1|.\nonumber
\end{gather*}
From the condition $1\leq a\leq m$, we have $\ell_0-n+1-a\geq\ell_{0}-n+1-m=\ell_{0}-n-m+1$. Hence
\begin{gather*}
\xi\bigl(H^{a-1}_{b},0\bigr)=|\underbrace{0,0,\dots,0}_{n},\underbrace{\ell_{0},\ell_{0}-1,\dots,\ell_{0}-n-m+2}_{n+m-1},\ell_{0}-n-m+1|=0.\nonumber
\end{gather*}

(2) By (\ref{ind}), we have
\begin{gather*}
\xi\bigl(0,H^{c-1}_{d}\bigr)=|\underbrace{\ell_{0},\ell_{0}-1,\dots,\ell_{0}-n+2}_{n-1},\ell_{0}-n+1,
\underbrace{\ell_{0},\ell_{0}-1,\dots,\ell_{0}-n-m+d+2}_{n+m-d-1},\nonumber\\
\hphantom{\xi\bigl(0,H^{c-1}_{d}\bigr)=|}{}
\underbrace{\ell_{0}-n-m+d,\dots,\ell_{0}-n-m+1}_{d},\ell_{0}-n-m+1-c|\nonumber.
\end{gather*}
From the condition $0\leq d \leq m-1$, we have $\ell_0-n-m+d+2\leq\ell_{0}-n-m+m-1+2=\ell_{0}-n+1$. Hence
\begin{gather*}
\xi\bigl(0,H^{c-1}_{d}\bigr)=|\underbrace{0,0,\dots,0}_{n},\underbrace{\ell_{0},\ell_{0}-1,\dots,\ell_{0}-n-m+1}_{n+m-1},\ell_{0}-n-m+1-c|=0.
\tag*{\qed}
\end{gather*}
\renewcommand{\qed}{}
\end{proof}
\begin{Proposition}\label{mabcd}
Suppose that $m\geq 1$, $a,c\geq 1$ and $a+d=m$. We have
\begin{gather}
\sum_{i=0}^{n+m}{(-1)}^{i}S_{k_0 k_1\dots k_{n-2}\ell_i/{H^{a-1}_{b}}}(u)\cdot S_{\ell_0 \ell_1\dots \widehat{\ell_{i}}\dots\ell_{n+m}/{H^{c-1}_{d}}}(u)\label{mabcd0}\\
\qquad{}={(-1)}^{j}\sum_{i=0}^{n+m}{(-1)}^{i}S_{k_0 k_1\dots k_{n-2}\ell_i/{H^{a-1-j}_{b}}}(u)\cdot S_{\ell_0 \ell_1\dots \widehat{\ell_{i}}\dots\ell_{n+m}/{H^{c-1}_{d+j}}}(u)\hspace{0.5cm}\ \ \ \ \ \ \ \ \label{mabcd1}\\
\qquad{}={(-1)}^{r}\sum_{i=0}^{n+m}{(-1)}^{i}S_{k_0 k_1\dots k_{n-2}\ell_i/{H^{a-1+r}_{b}}}(u)\cdot S_{\ell_0 \ell_1\dots \widehat{\ell_{i}}\dots\ell_{n+m}/{H^{c-1}_{d-r}}}(u) \label{mabcd2}\\
\qquad{}={(-1)}^{m-a-1}\sum_{i=0}^{n+m}{(-1)}^{i}S_{k_0 k_1\dots k_{n-2}\ell_i/{H^{c-1+m}_{b}}}(u)\cdot S_{\ell_0 \ell_1\dots \widehat{\ell_{i}}\dots\ell_{n+m}/{(0)}}(u)
\label{mabcd3}\\
\qquad{}={(-1)}^{a}\sum_{i=0}^{n+m}{(-1)}^{i}S_{k_0 k_1\dots k_{n-2}\ell_i/{(0)}}(u)\cdot S_{\ell_0 \ell_1\dots \widehat{\ell_{i}}\dots\ell_{n+m}/{H^{c-1}_{b+m}}}(u),
\label{mabcd4}
\end{gather}
where $1\leq j \leq a-1$ and $1\leq r\leq d$.
\end{Proposition}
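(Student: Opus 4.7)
The plan is to reduce each determinant $\xi\bigl(H^{a-1}_{b},H^{c-1}_{d}\bigr)$ appearing in the statement to a common product $\det(T)\cdot\det(B_{\mathrm{sorted}})$ decorated by a single sign that captures all the dependence on the split $a+d=m$. The proposition then becomes a matter of comparing these signs.

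The key reduction proceeds in two steps. First I would perform column operations on $\xi\bigl(H^{a-1}_{b},H^{c-1}_{d}\bigr)$: by \eqref{ind}, every shift $s\in\{0,1,\dots,n-1\}\setminus\{n-b-1\}$ occurs in both the left block and the right block, so subtracting the right-block column of shift $s$ from the left-block column of shift $s$ zeroes out the bottom $n+m+1$ rows of the latter (both entries equal $h_{\ell_i-s}$) while fixing the top $n-1$ rows (the right block is zero there by construction). After these $n-1$ operations, the only left-block column with nonzero bottom is the ``special'' one at shift $n-1+a$. Second, I would apply Laplace expansion along the top $n-1$ rows: the zero pattern just created forces a unique surviving term
\[
\xi\bigl(H^{a-1}_{b},H^{c-1}_{d}\bigr) = \det(T)\cdot\det(B),
\]
where $T=(h_{k_i-r_j})_{0\le i,j\le n-2}$ has column shifts $\{0,\dots,n-b-2,n-b,\dots,n-1\}$, and $B$ is the $(n+m+1)\times(n+m+1)$ block whose columns are the special left column followed by the right block.

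Using $a+d=m$ crucially, the special shift $n-1+a$ equals $n+m-d-1$, which is precisely the shift missing from the right block. Sorting $B$'s columns in ascending order therefore yields a matrix $B_{\mathrm{sorted}}$ with shifts $\{0,1,\dots,n+m-1,n+m-1+c\}$ that depends only on $c$ (and on the $\ell_i$), while $T$ depends only on $b$ (and on the $k_i$). Moving the special column from position $0$ to its sorted position $n-1+a$ by adjacent transpositions costs $(-1)^{n-1+a}$, so
\[
\xi\bigl(H^{a-1}_{b},H^{c-1}_{d}\bigr) = (-1)^{n-1+a}\det(T)\det(B_{\mathrm{sorted}}).
\]

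Applying the same procedure to each of the determinants on the right-hand sides of \eqref{mabcd1}--\eqref{mabcd4} reproduces the same $\det(T)\det(B_{\mathrm{sorted}})$, with only the sign differing. For \eqref{mabcd1} and \eqref{mabcd2}, substituting $(a\mp j,d\pm j)$ replaces $(-1)^{n-1+a}$ by $(-1)^{n-1+a\mp j}$, giving the claimed factors $(-1)^j$ and $(-1)^r$. For \eqref{mabcd3}, the empty right hook makes the right-block shifts $\{0,1,\dots,n+m-1\}$ and the special shift $n+m-1+c$ sorts to position $n+m$, producing sign $(-1)^{n+m}$ and hence the ratio $(-1)^{n-1+a-(n+m)}=(-1)^{m-a-1}$. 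For \eqref{mabcd4}, the empty left hook leaves the special column at interior position $n-b-1$; an extra $b$ adjacent transpositions are needed to move it to position $n-1$ before the Laplace step, after which the bottom sort contributes $(-1)^{n-b-1}$, for a total sign $(-1)^{n-1}$ and the ratio $(-1)^{a}$. The chief obstacle will be the sign bookkeeping, above all the extra transpositions required in \eqref{mabcd4}, where the special column does not begin at the rightmost position of the left block; once this is handled, the proposition follows by inspection of the computed signs.
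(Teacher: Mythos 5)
Your proposal is correct and follows essentially the same route as the paper: column operations kill the bottom of all non-special left-block columns, the hypothesis $a+d=m$ identifies the special shift $n-1+a$ with the unique gap $n+m-d-1$ in the right block, and each identity reduces to counting the adjacent transpositions needed to sort the columns into a common canonical form. The only cosmetic difference is that you factor explicitly as $\det(T)\cdot\det(B_{\mathrm{sorted}})$ via Laplace expansion, whereas the paper keeps the full rearranged determinant $X$ and compares each $\xi\bigl(H^{a-1}_{b},H^{c-1}_{d}\bigr)$ to it; your sign bookkeeping, including the extra $b$ transpositions in the case of \eqref{mabcd4}, agrees with the paper's throughout.
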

\begin{proof}
From $a+b=m$, we have
\begin{gather}
\xi\bigl(H^{a-1}_{b},H^{c-1}_{d}\bigr)=|\underbrace{\ell_{0},\ell_{0}-1,\dots,\ell_{0}-n+b+2}_{n-b-1},\nonumber\\
\hphantom{\xi\bigl(H^{a-1}_{b},H^{c-1}_{d}\bigr)=|}{}
\underbrace{\ell_{0}-n+b,\dots,\ell_{0}-n+1}_{b},\ell_{0}-n+1-a,\nonumber\\
\hphantom{\xi\bigl(H^{a-1}_{b},H^{c-1}_{d}\bigr)=|}{}
\underbrace{\ell_{0},\ell_{0}-1,\dots,\ell_{0}-n-a+2}_{n+a-1},\nonumber\\
\hphantom{\xi\bigl(H^{a-1}_{b},H^{c-1}_{d}\bigr)=|}{}
\underbrace{\ell_{0}-n-a,\dots,\ell_{0}-(n+m)+1}_{d},\ell_{0}-(n+m)+1-c|.
\label{m0}
\end{gather}
By performing column operations, we have
\begin{gather*}
\xi\bigl(H^{a-1}_{b},H^{c-1}_{d}\bigr)=|\underbrace{0,\dots,0}_{n-b-1},
\underbrace{0,\dots,0}_{b},\ell_{0}-n+1-a,
\underbrace{\ell_{0},\ell_{0}-1,\dots,\ell_{0}-n-a+2}_{n+a-1},\nonumber\\
\hphantom{\xi\bigl(H^{a-1}_{b},H^{c-1}_{d}\bigr)=|}{}
\underbrace{\ell_{0}-n-a,\dots,\ell_{0}-(n+m)+1}_{d},\ell_{0}-(n+m)+1-c|\nonumber.
\end{gather*}
Let $X$ denote the above determinant with rearranged columns,
\begin{gather*}
| \underbrace{0,\dots,0}_{n-1},
\ell_{0},\ell_{0}-1,\dots,\ell_{0}-n-a+2,\\
\quad{}\ell_{0}-n-a+1,\dots,\ell_{0}-(n+m)+1,\ell_{0}-(n+m)+1-c|\nonumber.
\end{gather*}
Clearly, we have
\smash{$X={(-1)}^{n+a-1}\xi\bigl(H^{a-1}_{b},H^{c-1}_{d}\bigr)$}.

(i) We show that (\ref{mabcd0}) = (\ref{mabcd1}). By (\ref{m0}), we have
\begin{gather*}
\xi\bigl(H^{a-1-j}_{b},H^{c-1}_{d+j}\bigr)=|\underbrace{\ell_{0},\ell_{0}-1,\dots,\ell_{0}-n+b+2}_{n-b-1},\underbrace{\ell_{0}-n+b,\dots,\ell_{0}-n+1}_{b},\nonumber\\
\hphantom{\xi\bigl(H^{a-1-j}_{b},H^{c-1}_{d+j}\bigr)=|}{}\,
\ell_{0}-n+1-a+j,\underbrace{\ell_{0},\ell_{0}-1,\dots,\ell_{0}-n-a+2+j}_{n+a-1-j},\nonumber\\
\hphantom{\xi\bigl(H^{a-1-j}_{b},H^{c-1}_{d+j}\bigr)=|}{}
\underbrace{\ell_{0}-n-a+j,\dots,\ell_{0}-(n+m)+1}_{d+j},\ell_{0}-(n+m)+1-c|\nonumber.
\end{gather*}
From the inequality $\ell_0-n+1\geq\ell_0-n-a+2+j$, we have
\begin{gather*}
|\underbrace{0,\dots,0}_{n-1},\ell_{0}-n+1-a+j,\underbrace{\ell_{0},\ell_{0}-1,\dots,\ell_{0}-n-a+2+j}_{n+a-1-j},\nonumber\\
\quad{}\underbrace{\ell_{0}-n-a+j,\dots,\ell_{0}-(n+m)+1}_{d+j},\ell_{0}-(n+m)+1-c|\nonumber.
\end{gather*}
Therefore, \smash{$X={(-1)}^{n+a-1-j}\xi\bigl(H^{a-1-j}_{b},H^{c-1}_{d+j}\bigr)$}. Thus, \smash{$\xi\bigl(H^{a-1}_{b},H^{c-1}_{d}\bigr)={(-1)}^{j}\xi\bigl(H^{a-1-j}_{b},H^{c-1}_{d+j}\bigr)$}.

(ii) We show that (\ref{mabcd0}) = (\ref{mabcd2}). By (\ref{m0}), we have
\begin{gather*}
\xi\bigl(H^{a-1+r}_{b},H^{c-1}_{d-r}\bigr)=|\underbrace{\ell_{0},\ell_{0}-1,\dots,\ell_{0}-n+b+2}_{n-b-1},\underbrace{\ell_{0}-n+b,\dots,\ell_{0}-n+1}_{b},\\
\hphantom{\xi\bigl(H^{a-1+r}_{b},H^{c-1}_{d-r}\bigr)=|}{}
\,\ell_{0}-n+1-a-r,\underbrace{\ell_{0},\ell_{0}-1,\dots,\ell_{0}-n-a-r+2}_{n+a-1+r},\\
\hphantom{\xi\bigl(H^{a-1+r}_{b},H^{c-1}_{d-r}\bigr)=|}{}
\underbrace{\ell_{0}-n-a-r,\dots,\ell_{0}-(n+m)+1}_{d-r},\ell_{0}-(n+m)+1-c|.
\end{gather*}
From the inequality $\ell_0-n+1\geq\ell_0-n-a-r+2$, we have
\begin{gather*}
|\underbrace{0,\dots,0}_{n-1},\ell_{0}-n+1-a-r,\underbrace{\ell_{0},\ell_{0}-1,\dots,\ell_{0}-n-a-r+2}_{n+a-1+r},\nonumber\\
\quad{}\underbrace{\ell_{0}-n-a-r,\dots,\ell_{0}-(n+m)+1}_{d-r},\ell_{0}-(n+m)+1-c|\nonumber.
\end{gather*}
Therefore, \smash{$X={(-1)}^{n+a-1+r}\xi\bigl(H^{a-1+r}_{b},H^{c-1}_{d-r}\bigr)$}. Thus, \smash{$\xi\bigl(H^{a-1}_{b},H^{c-1}_{d}\bigr)={(-1)}^{r}\xi\bigl(H^{a-1+r}_{b},H^{c-1}_{d-r}\bigr)$}.

(iii) We show that (\ref{mabcd0})=(\ref{mabcd3}). By (\ref{m0}), we have
\begin{gather*}
\xi\bigl(H^{c-1+m}_{b},0\bigr)=|\underbrace{\ell_{0},\ell_{0}-1,\dots,\ell_{0}-n+b+2}_{n-b-1},
\underbrace{\ell_{0}-n+b,\dots,\ell_{0}-n+1}_{b},\nonumber\\
\hphantom{\xi\bigl(H^{c-1+m}_{b},0\bigr)=|}{}
\,\ell_{0}-(n+m)+1-c,\underbrace{\ell_{0},\ell_{0}-1,\dots,\ell_{0}-(n+m)+2}_{n+m-1},
\ell_{0}-(n+m)+1|\nonumber.
\end{gather*}
From the inequality $\ell_{0}-n+1\geq \ell_0-(n+m)+2$, we have
\begin{gather*}
|\underbrace{0,\dots,0}_{n-b-1},
\underbrace{0,\dots,0}_{b},\ell_{0}-(n+m)+1-c,\nonumber\\
\quad{}\underbrace{\ell_{0},\ell_{0}-1,\dots,\ell_{0}-(n+m)+2}_{n+m-1},
\underbrace{\ell_{0}-(n+m)+1}_{1}|\nonumber.
\end{gather*}
Therefore, \smash{$X={(-1)}^{n+m}\xi\bigl(H^{c-1+m}_{b},0\bigr)$}. Thus, \smash{$\xi\bigl(H^{a-1}_{b},H^{c-1}_{d}\bigr)={(-1)}^{a-1+m}\xi\bigl(H^{c-1+m}_{b},0\bigr)$}.

(iv) We show that (\ref{mabcd0}) = (\ref{mabcd4}). By (\ref{m0}), we have
\begin{gather*}
\xi\bigl(0,H^{c-1}_{b+m}\bigr)=|\underbrace{\ell_{0},\ell_{0}-1,\dots,\ell_{0}-n+2}_{n-1}, {\ell_{0}-n+1},
\underbrace{\ell_{0},\ell_{0}-1,\dots,\ell_{0}-n+b+2}_{n-b-1},\\
\hphantom{\xi\bigl(0,H^{c-1}_{b+m}\bigr)=|}{}
\underbrace{\ell_{0}-n+b,\dots,\ell_{0}-(n+m)+1}_{b+m},\ell_{0}-(n+m)+1-c|\nonumber.
\end{gather*}
From the inequality $\ell_{0}-n+b+2\geq \ell_0-n+2$, we have
\begin{gather*}
|\underbrace{\,0,\dots,0,\stackrel{(n-b-1)\text{-th}}{\ell_{0}-n+b+1},0,\dots,0}_{n},\underbrace{\ell_{0},\ell_{0}-1,\dots,\ell_{0}-n+b+2}_{n-b-1},\\
\quad{}
\underbrace{\ell_{0}-n+b,\dots,\ell_{0}-(n+m)+1}_{b+m},\ell_{0}-(n+m)+1-c|\nonumber.
\end{gather*}
Therefore,
\[
X={(-1)}^{b}{(-1)}^{n-b-1}\xi\bigl(0,H^{c-1}_{b+m}\bigr)={(-1)}^{n-1}\xi\bigl(0,H^{c-1}_{b+m}\bigr).
\]
 Thus, $\xi\smash{\bigl(H^{a-1}_{b},H^{c-1}_{d}\bigr)}={(-1)}^{a}\xi\smash{\bigl(0,H^{c-1}_{b+m}\bigr)}$.
\end{proof}

\begin{Proposition}\label{abcd}
Suppose that $m\geq 1$, $a, c\geq 1$ and $a+d\ne m$.
\begin{enumerate}
\item[$(1)$] For $a\geq m+1$, $d\geq m$, we have
\begin{gather}
\sum_{i=0}^{n+m}{(-1)}^{i}\bigl(S_{k_0 k_1\dots k_{n-2}\ell_i/{H^{a-1}_{b}}}(u)\cdot S_{\ell_0 \ell_1\dots \widehat{\ell_{i}}\dots\ell_{n+m}/{H^{c-1}_{d}}}(u)\nonumber\\
\hphantom{\sum_{i=0}^{n+m}{(-1)}^{i}\bigl(}{}
+S_{k_0 k_1\dots k_{n-2}\ell_i/{H^{c-1+m}_{b}}}(u)\cdot S_{\ell_0 \ell_1\dots \widehat{\ell_{i}}\dots\ell_{n+m}/{H^{a-1-m}_{d}}}(u)\nonumber\\
\hphantom{\sum_{i=0}^{n+m}{(-1)}^{i}\bigl(}{}
+S_{k_0 k_1\dots k_{n-2}\ell_i/{H^{a-1}_{d-m}}}(u)\cdot S_{\ell_0 \ell_1\dots \widehat{\ell_{i}}\dots\ell_{n+m}/{H^{c-1}_{b+m}}}(u)\nonumber\\
\hphantom{\sum_{i=0}^{n+m}{(-1)}^{i}\bigl(}{}
+S_{k_0 k_1\dots k_{n-2}\ell_i/{H^{c-1+m}_{d-m}}}(u)\cdot S_{\ell_0 \ell_1\dots \widehat{\ell_{i}}\dots\ell_{n+m}/{H^{a-1-m}_{b+m}}}(u)\bigr)=0.\label{abcd1}
\end{gather}

\item[$(2)$] For $a\geq m+1$, $0\leq d\leq m-1$, we have
 \begin{gather}
\sum_{i=0}^{n+m}{(-1)}^{i}\bigl(S_{k_0 k_1\dots k_{n-2}\ell_i/{H^{a-1}_{b}}}(u)\cdot S_{\ell_0 \ell_1\dots \widehat{\ell_{i}}\dots\ell_{n+m}/{H^{c-1}_{d}}}(u)\nonumber\\
\hphantom{\sum_{i=0}^{n+m}{(-1)}^{i}\bigl(}{}
+S_{k_0 k_1\dots k_{n-2}\ell_i/{H^{c-1+m}_{b}}}(u)\cdot S_{\ell_0 \ell_1\dots \widehat{\ell_{i}}\dots\ell_{n+m}/{H^{a-1-m}_{d}}}(u)\bigr)=0.
\label{abcd2}
\end{gather}

\item[$(3)$] For $0<a\leq m,d\geq m$, we have
 \begin{gather}
\sum_{i=0}^{n+m}{(-1)}^{i}\bigl(S_{k_0 k_1\dots k_{n-2}\ell_i/{H^{a-1}_{b}}}(u)\cdot S_{\ell_0 \ell_1\dots \widehat{\ell_{i}}\dots\ell_{n+m}/{H^{c-1}_{d}}}(u)\nonumber\\
\hphantom{\sum_{i=0}^{n+m}{(-1)}^{i}\bigl(}{}
+S_{k_0 k_1\dots k_{n-2}\ell_i/{H^{a-1}_{d-m}}}(u)\cdot S_{\ell_0 \ell_1\dots \widehat{\ell_{i}}\dots\ell_{n+m}/{H^{c-1}_{b+m}}}(u)\bigr)=0.
\label{abcd3}
\end{gather}

\item[$(4)$] For $0<a\leq m$, $0\leq d\leq m-1$, we have
\begin{gather}
\sum_{i=0}^{n+m}{(-1)}^{i}S_{k_0 k_1\dots k_{n-2}\ell_i/{H^{a-1}_{b}}}(u)\cdot S_{\ell_0 \ell_1\dots \widehat{\ell_{i}}\dots\ell_{n+m}/{H^{c-1}_{d}}}(u)=0.
\label{abcd4}
\end{gather}
\end{enumerate}
\end{Proposition}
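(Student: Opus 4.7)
The plan is to follow the template established by Propositions~\ref{00} and~\ref{mabcd}: for each determinant $\xi\bigl(H^{a-1}_{b},H^{c-1}_{d}\bigr)$ appearing in (\ref{abcd1})--(\ref{abcd4}), write down its index representation via (\ref{ind}), (\ref{ind1}), (\ref{ind2}), or (\ref{ind3}), perform column operations that exploit index coincidences between the left and right blocks, and either force a column to be identically zero or rearrange the result into a common canonical form with a controlled sign.

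I would first dispatch \textbf{case~(4)}, which should be the cleanest. The hypotheses $1\leq a\leq m$ and $0\leq d\leq m-1$, together with $a+d\ne m$, imply that the left-hook entry $\ell_{0}-n+1-a$ lies in the right block's range (from $a\leq m$) but is distinct from the right block's unique gap $\ell_{0}-(n+m)+d+1$; similarly, every other left-block index appears in the right block, since $d\leq m-1$ forces $\ell_{0}-(n+m)+d+2\leq\ell_{0}-n+1$, so the right block covers the interval $\{\ell_{0},\ell_{0}-1,\ldots,\ell_{0}-n+1\}$ completely. Hence each of the $n$ left columns can be matched to a right column of equal index and cleared below the $(n-2)$-th row by a single subtraction. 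The resulting $n$ left columns are supported in only the top $n-1$ rows, so they are linearly dependent and the determinant vanishes termwise.

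Next I would treat \textbf{cases~(2) and~(3)}, which are mirror images. In case~(2), $a\ge m+1$ blocks the ``left-into-right'' mechanism but $d\le m-1$ retains the ``right-into-left'' one. By (\ref{ind1}), the determinant $\xi\bigl(H^{c-1+m}_{b},H^{a-1-m}_{d}\bigr)$ has the same multiset of row indices as $\xi\bigl(H^{a-1}_{b},H^{c-1}_{d}\bigr)$, with the two hook entries $\ell_{0}-n+1-a$ and $\ell_{0}-(n+m)+1-c$ interchanged between the two blocks. Applying the surviving mechanism to both determinants in parallel---following the column operation and rearrangement exactly as in the proof of Proposition~\ref{mabcd}---reduces each to the same rearranged determinant $X$, but with opposite signs, since the hook entries land in columns whose swap contributes an overall factor of $-1$. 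This gives~(\ref{abcd2}). Case~(3) is proved identically, using (\ref{ind2}) in place of (\ref{ind1}).

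Finally, \textbf{case~(1)}, where $a\ge m+1$ and $d\ge m$, so neither absorption mechanism is available and both hook entries persist as independent columns. The four summands in (\ref{abcd1}) correspond to the four ways of distributing $\ell_{0}-n+1-a$ and $\ell_{0}-(n+m)+1-c$ between the two blocks, which are enumerated by formulas~(\ref{ind}), (\ref{ind1}), (\ref{ind2}), and~(\ref{ind3}). Bringing each to the same canonical rearrangement $X$ as in Proposition~\ref{mabcd}, the four determinants acquire signs governed by two independent column transpositions, producing an alternating pattern that makes the four terms sum to zero. \emph{The main obstacle} is precisely this sign bookkeeping in case~(1): with two free hook entries, each column subtraction, each rearrangement of columns to reach $X$, and each placement of the two hook entries contributes a sign, and tracking them simultaneously is more delicate than in Proposition~\ref{mabcd}, where the assumption $a+d=m$ effectively collapses the two hooks into a single one. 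Once these signs are pinned down, (\ref{abcd1}) is immediate, and the proposition is proved.
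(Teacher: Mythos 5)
Your treatment of cases (2) and (4) matches the paper's. In (4) the point is exactly the one you identify: since $d\leq m-1$ the right block's first run covers the whole range $\ell_{0},\dots,\ell_{0}-n+1$, and since $a\leq m$ and $a+d\ne m$ the hook entry $\ell_{0}-n+1-a$ lies in that range and misses the right block's gap $\ell_{0}-(n+m)+d+1$, so every left column clears and each summand vanishes. In (2) only the hook entries survive and they are interchanged between the two determinants, giving the factor $-1$. For (3) you are too quick in saying it is "proved identically" to (2): there the surviving entry is not a hook entry but the gap witness $\ell_{0}-(n+m)+d+1$ (resp.\ $\ell_{0}-n+b+1$), whose column position depends on the relative size of $d$ and $b+m$, and the paper must split into the subcases $d<b+m$, $d>b+m$ and $d=b+m$ (in the last, both determinants are separately zero rather than cancelling). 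Still, the mechanism you describe for (3) --- reduce both summands to a common rearranged determinant $X$ with opposite signs --- is the right one.

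The genuine gap is in case (1). When $a\geq m+1$ and $d\geq m$ with $d\ne b+m$, \emph{two} nonzero entries survive in the $(n-1)$-th row of the left block after the column operations: the hook entry $\ell_{0}-n+1-a\leq\ell_{0}-(n+m)$, which has no partner among the right block's consecutive indices, \emph{and} the gap witness $\ell_{0}-(n+m)+d+1$. Consequently $\xi\bigl(H^{a-1}_{b},H^{c-1}_{d}\bigr)$ does \emph{not} reduce to a single signed copy of a canonical determinant, and your picture of four signed copies of one $X$ whose signs alternate is false in this regime. The paper instead applies a Laplace expansion to each of the four determinants, writing each as a sum of two products of complementary minors, and then verifies four separate identities pairing off the resulting eight products: the off-diagonal Laplace term of $\xi\bigl(H^{a-1}_{b},H^{c-1}_{d}\bigr)$ cancels that of $\xi\bigl(H^{a-1}_{d-m},H^{c-1}_{b+m}\bigr)$, its block-diagonal term cancels that of $\xi\bigl(H^{c-1+m}_{b},H^{a-1-m}_{d}\bigr)$, and similarly for the remaining four products. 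This cross-pairing of Laplace terms, not a sign count on whole determinants, is the actual content of (1); as written, your outline would not close. The degenerate subcase $d=b+m$, where the four summands collapse into two pairs of equal determinants with opposite signs, also needs separate mention.
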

\begin{proof} (1) We consider separately the cases where (i) $d<b+m$, (ii) $d>b+m$ and (iii)~${d=b\!+\!m}$.

Case (i) $d<b+m$.
\begin{itemize}
\item \smash{$\xi\bigl(H^{a-1}_{b}, H^{c-1}_{d}\bigr)$}: From the conditions $a\geq m+1$, $d\geq m$ and $d<b+m$, we have $\ell_{0}-n+1-a\leq \ell_{0}-(n+m)$, $\ell_{0}-(n+m)+d+2\geq \ell_{0}-n+2$ and $\ell_{0}-(n+m)+d+2 <\ell_{0}-n+b+2$, respectively. Hence, we have
\begin{gather*}
\xi\bigl(H^{a-1}_{b}, H^{c-1}_{d}\bigr)=|\underbrace{\ell_{0},\dots,\ell_{0}-n+b+2}_{n-b-1},\ell_{0}-n+b,\dots, \stackrel{(n+m-d-2)\text{-th}}{\ell_{0}-(n+m)+d+1},\nonumber\\
\hphantom{\xi\bigl(H^{a-1}_{b}, H^{c-1}_{d}\bigr)=|}{}
\underbrace{\ell_{0}-(n+m)+d,\dots,\ell_0-n+1}_{d-m},\ell_{0}-n+1-a,\nonumber\\
\hphantom{\xi\bigl(H^{a-1}_{b}, H^{c-1}_{d}\bigr)=|}{}
\underbrace{\ell_{0},\ell_{0}-1,\dots,\ell_{0}-(n+m)+(d+2)}_{n+m-d-1},\nonumber\\
\hphantom{\xi\bigl(H^{a-1}_{b}, H^{c-1}_{d}\bigr)=|}{}
\underbrace{\ell_{0}-(n+m)+d,\dots,\ell_{0}-(n+m)+1}_{d},\ell_{0}-(n+m)+1-c|.
\end{gather*}
By performing column operations, we have
\begin{gather*}
\xi\bigl(H^{a-1}_{b}, H^{c-1}_{d}\bigr)=|\underbrace{0,\dots,0}_{n+m-d-2},\stackrel{(n+m-d-2)\text{-th}}{\ell_{0}-(n+m)+d+1},0, \dots,0,\ell_{0}-n+1-a,\nonumber\\
\hphantom{\xi\bigl(H^{a-1}_{b}, H^{c-1}_{d}\bigr)=|}{}
\underbrace{\ell_{0},\ell_{0}-1,\dots,\ell_{0}-(n+m)+(d+2)}_{n+m-d-1},\nonumber\\
\hphantom{\xi\bigl(H^{a-1}_{b}, H^{c-1}_{d}\bigr)=|}{}
\underbrace{\ell_{0}-(n+m)+d,\dots,\ell_{0}-(n+m)+1}_{d},\ell_{0}-(n+m)+1-c|.
\end{gather*}
Using the Laplace expansion, we have
\begin{align*}
\xi\bigl(H^{a-1}_{b}, H^{c-1}_{d}\bigr)={}&
{(-1)}^{-m+d+1}\xi\bigl(H^{a-1}_{b}, H^{c-1}_{d}\bigr)^{0,1,\dots,n-2}_{0,1,\dots,\widehat{n+m-d-2},\dots,n-1}\\
&\quad{}\times\xi\bigl(H^{a-1}_{b}, H^{c-1}_{d}\bigr)^{n-1,n,\dots,2n+m-1}_{n+m+d-2,n,\dots,2n+m-1}\\
&{}+\xi\bigl(H^{a-1}_{b}, H^{c-1}_{d}\bigr)^{0,1,\dots,n-2}_{0,1,\dots,n-2}\cdot\xi\bigl(H^{a-1}_{b}, H^{c-1}_{d}\bigr)^{n-1,n,\dots,2n+m-1}_{n-1,n,\dots,2n+m-1}.
\end{align*}

\item $\xi\bigl(H^{c-1+m}_{b}, H^{a-1-m}_{d}\bigr)$: From (\ref{ind1}), we have
\begin{gather*}
\xi\bigl(H^{c-1+m}_{b}, H^{a-1-m}_{d}\bigr)=|\underbrace{\ell_{0},\ell_{0}-1,\dots,\ell_{0}-n+b+2}_{n-b-1},\\
\hphantom{\xi\bigl(H^{c-1+m}_{b}, H^{a-1-m}_{d}\bigr)=|}{}
\underbrace{\ell_{0}-n+b,\dots,\ell_{0}-(n+m)+d+1,\dots,\ell_{0}-n+1}_{b},\nonumber\\
\hphantom{\xi\bigl(H^{c-1+m}_{b}, H^{a-1-m}_{d}\bigr)=|}{}
\,\ell_{0}-(n+m)+1-c,\underbrace{\ell_{0},\ell_{0}-1,\dots,\ell_{0}-(n+m)+d+2}_{n+m-d-1},\nonumber\\
\hphantom{\xi\bigl(H^{c-1+m}_{b}, H^{a-1-m}_{d}\bigr)=|}{}
\underbrace{\ell_{0}-(n+m)+d,\dots,\ell_{0}-(n+m)+1}_{d},\ell_{0}-n+1-a|.
\end{gather*}
By column operations, we have
\begin{gather*}
\xi\bigl(H^{c-1+m}_{b}, H^{a-1-m}_{d}\bigr)=|\underbrace{0,\dots,0}_{n+m-d-2},\stackrel{(n+m-d-2)\text{-th}}{\ell_{0}-(n+m)+d+1},\underbrace{0,\dots,0}_{d-m},\nonumber\\
\hphantom{\xi\bigl(H^{c-1+m}_{b}, H^{a-1-m}_{d}\bigr)=|}{}
\,\ell_{0}-(n+m)+1-c,\underbrace{\ell_{0},\ell_{0}-1,\dots,\ell_{0}-(n+m)+(d+2)}_{n+m-d-1},\nonumber\\
\hphantom{\xi\bigl(H^{c-1+m}_{b}, H^{a-1-m}_{d}\bigr)=|}{}
\underbrace{\ell_{0}-(n+m)+d,\dots,\ell_{0}-(n+m)+1}_{d},\ell_{0}-n+1-a|.
\end{gather*}
Using the Laplace expansion, we have
\begin{gather*}
\xi\bigl(H^{c-1+m}_{b}, H^{a-1-m}_{d}\bigr)\\
\qquad{}={(-1)}^{-m+d+1}\xi\bigl(H^{c-1+m}_{b}, H^{a-1-m}_{d}\bigr)^{0,1,\dots,n-2}_{0,1,\dots,\widehat{n+m-d-2},\dots,n-1}\\
\qquad\qquad{}\times
\xi\bigl(H^{c-1+m}_{b}, H^{a-1-m}_{d}\bigr)^{n-1,n,\dots,2n+m-1}_{n+m+d-2,n,\dots,2n+m-1}\\
\qquad\quad{}
+\xi\bigl(H^{c-1+m}_{b}, H^{a-1-m}_{d}\bigr)^{0,1,\dots,n-2}_{0,1,\dots,n-2}\cdot\xi\bigl(H^{c-1+m}_{b}, H^{a-1-m}_{d}\bigr)^{n-1,n,\dots,2n+m-1}_{n-1,n,\dots,2n+m-1}.
\end{gather*}

\item $\xi\bigl(H^{a-1}_{d-m}, H^{c-1}_{b+m}\bigr)$: From (\ref{ind2}), we have
\begin{gather*}
\xi\bigl(H^{a-1}_{d-m}, H^{c-1}_{b+m}\bigr)=|\underbrace{\ell_{0},\dots,\ell_{0}-n+b+1,\dots,\ell_0-(n+m)+d+2}_{n+m-d-1},\nonumber\\
\hphantom{\xi\bigl(H^{a-1}_{d-m}, H^{c-1}_{b+m}\bigr)=|}{}
\underbrace{\ell_{0}-(n+m)+d,\dots,\ell_{0}-n+1}_{-m+d},\ell_{0}-n+1-a,\nonumber\\
\hphantom{\xi\bigl(H^{a-1}_{d-m}, H^{c-1}_{b+m}\bigr)=|}{}
\underbrace{\ell_{0},\ell_{0}-1,\dots,\ell_{0}-n+b+2}_{n-b-1},\nonumber\\
\hphantom{\xi\bigl(H^{a-1}_{d-m}, H^{c-1}_{b+m}\bigr)=|}{}
\underbrace{\ell_{0}-n+b,\dots,\ell_{0}-(n+m)+1}_{m+b},\ell_{0}-(n+m)+1-c|.
\end{gather*}
From the condition $d<b+m$, we have $\ell_0-n+b+2>\ell_0-n-m+d+2>\ell_0-n-m+1$. Also, we have $\ell_0-n-m+d<\ell_0-n+b$. From (\ref{ind2}), we have
\begin{gather*}
\begin{split}
& \xi\bigl(H^{a-1}_{d-m}, H^{c-1}_{b+m}\bigr)=|\underbrace{0,\dots,0}_{n-b-1},\stackrel{(n-b-1)\text{-th}}{\ell_{0}-n+b+1}, \underbrace{0,\dots,0}_{b-1},\ell_{0}-n+1-a,\\
& \hphantom{\xi\bigl(H^{a-1}_{d-m}, H^{c-1}_{b+m}\bigr)=|}{}
\underbrace{\ell_{0},\ell_{0}-1,\dots,\ell_{0}-n+b+2}_{n-b-1},\\
& \hphantom{\xi\bigl(H^{a-1}_{d-m}, H^{c-1}_{b+m}\bigr)=|}{}
\underbrace{\ell_{0}-n+b,\dots,\ell_{0}-(n+m)+1}_{m+b},\ell_{0}-(n+m)+1-c|.
\end{split}
\end{gather*}
Using the Laplace expansion, we have
\begin{align*}
\xi\bigl(H^{a-1}_{d-m}, H^{c-1}_{b+m}\bigr)={}&
{(-1)}^{b}\xi\bigl(H^{a-1}_{d-m}, H^{c-1}_{b+m}\bigr)^{0,1,\dots,n-2}_{0,1,\dots,\widehat{n-b-1},\dots,n-1}\\
&\quad{}\times
 \xi\bigl(H^{a-1}_{d-m}, H^{c-1}_{b+m}\bigr)^{n-1,n,\dots,2n+m-1}_{n-b-1,n,\dots,2n+m-1}\\
&{}+\xi\bigl(H^{a-1}_{d-m}, H^{c-1}_{b+m}\bigr)^{0,1,\dots,n-2}_{0,1,\dots,n-2}\cdot\xi\bigl(H^{a-1}_{d-m}, H^{c-1}_{b+m}\bigr)^{n-1,n,\dots,2n+m-1}_{n-1,n,\dots,2n+m-1}.
\end{align*}

\item $\xi\bigl(H^{c-1+m}_{d-m}, H^{a-1-m}_{b+m}\bigr)$: From (\ref{ind3}), we have
\begin{gather*}
\xi\bigl(H^{c-1+m}_{d-m}, H^{a-1-m}_{b+m}\bigr)=|\underbrace{\ell_{0},\ell_{0}-1,\dots,\ell_0-n+b+1,\dots,\ell_{0}-(n+m)+d+2}_{n+m-d-1},\\
\hphantom{\xi\bigl(H^{c-1+m}_{d-m}, H^{a-1-m}_{b+m}\bigr)=|}{}
\underbrace{\ell_{0}-(n+m)+d,\dots,\ell_{0}-n+1}_{-m+d},\ell_{0}-(n+m)+1-c,\nonumber\\
\hphantom{\xi\bigl(H^{c-1+m}_{d-m}, H^{a-1-m}_{b+m}\bigr)=|}{}
\underbrace{\ell_{0},\ell_{0}-1,\dots,\ell_{0}-n+b+2}_{n-b-1},\nonumber\\
\hphantom{\xi\bigl(H^{c-1+m}_{d-m}, H^{a-1-m}_{b+m}\bigr)=|}{}
\underbrace{\ell_{0}-n+b,\dots,\ell_{0}-(n+m)+1}_{m+b},\ell_{0}-n+1-a|.
\end{gather*}
By column operations, we have
\begin{gather*}
\xi\bigl(H^{c-1+m}_{d-m}, H^{a-1-m}_{b+m}\bigr)=|\underbrace{0,\dots,0}_{n-b-1},\stackrel{(n-b-1)\text{-th}}{\ell_{0}-n+b+1},0, \dots,0,\ell_{0}-(n+m)+1-c,\nonumber\\
\hphantom{\xi\bigl(H^{c-1+m}_{d-m}, H^{a-1-m}_{b+m}\bigr)=|}{}
\underbrace{\ell_{0},\ell_{0}-1,\dots,\ell_{0}-n+b+2}_{n-b-1},\nonumber\\
\hphantom{\xi\bigl(H^{c-1+m}_{d-m}, H^{a-1-m}_{b+m}\bigr)=|}{}
\underbrace{\ell_{0}-n+b,\dots,\ell_{0}-(n+m)+1}_{m+b},\ell_{0}-(n+m)+1-a|\nonumber.
\end{gather*}
Using the Laplace expansion, we have
\begin{gather*}
\xi\bigl(H^{c-1+m}_{d-m}, H^{a-1-m}_{b+m}\bigr)\\
\qquad{}={(-1)}^{b}\xi\bigl(H^{c-1+m}_{d-m}, H^{a-1+m}_{b+m}\bigr)^{0,1,\dots,n-2}_{0,1,\dots,\widehat{n-b-1},\dots,n-1}\\
\qquad\qquad{}\times \xi\bigl(H^{c-1+m}_{d-m}, H^{a-1-m}_{b+m}\bigr)^{n-1,n,\dots,2n+m-1}_{n-b-1,n,\dots,2n+m-1}\\
\qquad\quad{}+\xi\bigl(H^{c-1+m}_{d-m}, H^{a-1+m}_{b+m}\bigr)^{0,1,\dots,n-2}_{0,1,\dots,n-2}\cdot\xi\bigl(H^{c-1+m}_{d-m}, H^{a-1+m}_{b+m}\bigr)^{n-1,n,\dots,2n+m-1}_{n-1,n,\dots,2n+m-1}.
\end{gather*}
It is verified that the following holds:
\begin{gather*}
{(-1)}^{-m+d+1}\xi\bigl(H^{a-1}_{b}, H^{c-1}_{d}\bigr)^{0,1,\dots,n-2}_{0,1,\dots,\widehat{n+m-d-2},\dots,n-1}\cdot\xi\bigl(H^{a-1}_{b}, H^{c-1}_{d}\bigr)^{n-1,n,\dots,2n+m-1}_{n+m+d-2,n,\dots,2n+m-1}\\
\qquad{}=-{(-1)}^{b}\xi\bigl(H^{a-1}_{d-m}, H^{c-1}_{b+m}\bigr)^{0,1,\dots,n-2}_{0,1,\dots,\widehat{n-b-1},\dots,n-1}\cdot \xi\bigl(H^{a-1}_{d-m}, H^{c-1}_{b+m}\bigr)^{n-1,n,\dots,2n+m-1}_{n-b-1,n,\dots,2n+m-1},\\
\xi\bigl(H^{a-1}_{b}, H^{c-1}_{d}\bigr)^{0,1,\dots,n-2}_{0,1,\dots,n-2}\cdot\xi\bigl(H^{a-1}_{b}, H^{c-1}_{d}\bigr)^{n-1,n,\dots,2n+m-1}_{n-1,n,\dots,2n+m-1}\\
\qquad{}=-\xi\bigl(H^{c-1+m}_{b}, H^{a-1-m}_{d}\bigr)^{0,1,\dots,n-2}_{0,1,\dots,n-2}\cdot\xi\bigl(H^{c-1+m}_{b}, H^{a-1-m}_{d}\bigr)^{n-1,n,\dots,2n+m-1}_{n-1,n,\dots,2n+m-1},\\
{(-1)}^{-m+d+1}\xi\bigl(H^{c-1+m}_{b}, H^{a-1-m}_{d}\bigr)^{0,1,\dots,n-2}_{0,1,\dots,\widehat{n+m-d-2},\dots,n-1}\\
\quad{}\times \xi\bigl(H^{c-1+m}_{b}, H^{a-1-m}_{d}\bigr)^{n-1,n,\dots,2n+m-1}_{n+m+d-2,n,\dots,2n+m-1}\\
\qquad{}=-{(-1)}^{b}\xi\bigl(H^{c-1+m}_{d-m}, H^{a-1+m}_{b+m}\bigr)^{0,1,\dots,n-2}_{0,1,\dots,\widehat{n-b-1},\dots,n-1}\\
\qquad\quad{}\times \xi\bigl(H^{c-1+m}_{d-m}, H^{a-1-m}_{b+m}\bigr)^{n-1,n,\dots,2n+m-1}_{n-b-1,n,\dots,2n+m-1},\\
\xi\bigl(H^{a-1}_{d-m}, H^{c-1}_{b+m}\bigr)^{0,1,\dots,n-2}_{0,1,\dots,n-2}\cdot\xi\bigl(H^{a-1}_{d-m}, H^{c-1}_{b+m}\bigr)^{n-1,n,\dots,2n+m-1}_{n-1,n,\dots,2n+m-1}\\
\qquad{}=-\xi\bigl(H^{c-1+m}_{d-m}, H^{a-1+m}_{b+m}\bigr)^{0,1,\dots,n-2}_{0,1,\dots,n-2}\cdot\xi\bigl(H^{c-1+m}_{d-m}, H^{a-1+m}_{b+m}\bigr)^{n-1,n,\dots,2n+m-1}_{n-1,n,\dots,2n+m-1}.
\end{gather*}
\end{itemize}
Thus the assertion holds.

Case (ii) $d>b+m$.
Let $d<b+m$. Then we have $b+m>d-m+m$. Hence, if we replace \smash{$\xi\bigl(H^{a-1}_{b}, H^{c-1}_{d}\bigr)$} by \smash{$\xi\bigl(H^{a-1}_{d-m}, H^{c-1}_{b+m}\bigr)$} in Case (i), we obtain the equation (\ref{abcd1}).

Case (iii) $d=b+m$.
\begin{gather*}
\hspace{-3.85mm}\bullet\hspace{2mm}
\xi\bigl(H^{a-1}_{b}, H^{c-1}_{d}\bigr)=\xi\bigl(H^{a-1}_{d-m}, H^{c-1}_{b+m}\bigr)\\
\hphantom{\hspace{-3.85mm}\bullet\hspace{2mm}\xi\bigl(H^{a-1}_{b}, H^{c-1}_{d}\bigr)}{}
=|\underbrace{0,\dots,0}_{n-1},\ell_{0}-n+1-a,\underbrace{\ell_{0},\ell_{0}-1,\dots,\ell_{0}-n+b+2}_{n-b-1},\\
\hphantom{\hspace{-3.85mm}\bullet\hspace{2mm}\xi\bigl(H^{a-1}_{b}, H^{c-1}_{d}\bigr)=|}{}
\underbrace{\ell_{0}-n+b,\dots,\ell_{0}-(n+m)+1}_{b+m},\ell_{0}-(n+m)+1-c|.\\
\hspace{-3.85mm}\bullet\hspace{2mm}
\xi\bigl(H^{c-1+m}_{b}, H^{a-1-m}_{d}\bigr)=\xi\bigl(H^{c-1+m}_{d-m},H^{a-1-m}_{b+m}\bigr)\\
\hphantom{\hspace{-3.85mm}\bullet\hspace{2mm}\xi\bigl(H^{c-1+m}_{b}, H^{a-1-m}_{d}\bigr)}{}
=|\underbrace{0,\dots,0}_{n-1},\ell_{0}-n-m+1-c,\underbrace{\ell_{0},\ell_{0}-1,\dots,\ell_{0}-n+b+2}_{n-b-1},\\
\hphantom{\hspace{-3.85mm}\bullet\hspace{2mm}\xi\bigl(H^{c-1+m}_{b}, H^{a-1-m}_{d}\bigr)=|}{}
\underbrace{\ell_{0}-n+b,\dots,\ell_{0}-(n+m)+1}_{b+m},\ell_{0}-n+1-a|.
\end{gather*}
Clearly, we have \smash{$\xi\bigl(H^{a-1}_{b}, H^{c-1}_{d}\bigr)=-\xi\bigl(H^{c-1+m}_{b}, H^{a-1-m}_{d}\bigr)$}.

(2) From the condition $0\leq d\leq m-1$, we have $\ell_{0}-(n+m)+d+2\leq \ell_{0}-n+1$. By~performing column operations, we have
\begin{gather*}
\xi\bigl(H^{a-1}_{b},H^{c-1}_{d}\bigr)=|\underbrace{\ell_0,\dots,0}_{n-1},\ell_{0}-(n-1)-a,\\
\hphantom{\xi\bigl(H^{a-1}_{b},H^{c-1}_{d}\bigr)=|}{}
\underbrace{\ell_{0},\ell_{0}-1,\dots,\ell_{0}-(n+m)+(d+2)}_{n+m-d-1},\nonumber\\
\hphantom{\xi\bigl(H^{a-1}_{b},H^{c-1}_{d}\bigr)=|}{}
\underbrace{\ell_{0}-(n+m)+d,\dots,\ell_{0}-(n+m)+1}_{d},\ell_{0}-(n+m)+1-c|\nonumber.
\end{gather*}
Similarly,
\begin{gather*}
\xi\bigl(H^{c-1+m}_{b},H^{a-1-m}_{d}\bigr)=|\underbrace{0,\dots,0}_{n-1},\ell_{0}-(n+m-1)-c,\\
\hphantom{\xi\bigl(H^{c-1+m}_{b},H^{a-1-m}_{d}\bigr)=|}{}
\underbrace{\ell_{0},\ell_{0}-1,\dots,\ell_{0}-(n+m)+(d+2)}_{n+m-d-1},\\
\hphantom{\xi\bigl(H^{c-1+m}_{b},H^{a-1-m}_{d}\bigr)=|}{}
\underbrace{\ell_{0}-(n+m)+d,\dots,\ell_{0}-(n+m)+1}_{d},\ell_{0}-(n-1)-a|.
\end{gather*}
Clearly, we have \smash{$\xi\bigl(H^{a-1}_{b},H^{c-1}_{d}\bigr)=-\xi\bigl(H^{c-1+m}_{b},H^{a-1-m}_{d}\bigr)$}.

(3) We consider separately the cases where $({\rm i})\ d<b+m,\,({\rm ii})\ d>b+m$ and $({\rm iii})\ d=b+m$.

Case (i) $d<b+m$.
\begin{itemize}
\item \smash{$\xi\bigl(H^{a-1}_{b},H^{c-1}_{d}\bigr)$}: $({\rm i})$ From the conditions $d< b+m$ and $d\geq m$, we have $\ell_{0}-n+b+2>\ell_{0}-(n+m)+d+2$ and $\ell_{0}-(n+m)+d+2\geq \ell_{0}-n+2$, respectively. 
From $(\ref{ind})$, we~have
\begin{gather*}
\xi\bigl(H^{a-1}_{b},H^{c-1}_{d}\bigr)=|\underbrace{0,\dots,0}_{n-b-1},\underbrace{0,\dots,0,\ell_0-(n+m)+d+1,\dots,\ell_{0}-n+1}_{b},\nonumber\\
\hphantom{\xi\bigl(H^{a-1}_{b},H^{c-1}_{d}\bigr)=|}{}
\,\ell_{0}-(n-1)-a,\underbrace{\ell_{0},\ell_{0}-1,\dots,\ell_{0}-(n+m)+(d+2)}_{n+m-d-1},\nonumber\\
\hphantom{\xi\bigl(H^{a-1}_{b},H^{c-1}_{d}\bigr)=|}{}
\underbrace{\ell_{0}-(n+m)+d,\dots,\ell_{0}-(n+m)+1}_{d},\ell_{0}-(n+m)+1-c|.
\end{gather*}
Furthermore, from the condition $1\leq a\leq m\leq d$, we have $\ell_{0}-(n-1)-a\geq \ell_{0}-(n+m)+1$. Hence, we have
\begin{gather*}
\xi\bigl(H^{a-1}_{b},H^{c-1}_{d}\bigr)=|\underbrace{0,\dots,0}_{n-b-1},
0,\dots,0,\stackrel{(n+m-d-2)\text{-th}}{\ell_0-(n+m)+d+1},\underbrace{0,\dots,0,0}_{-m+d+2},\nonumber\\
\hphantom{\xi\bigl(H^{a-1}_{b},H^{c-1}_{d}\bigr)=|}{}
\underbrace{\ell_{0},\ell_{0}-1,\dots,\ell_{0}-(n+m)+(d+2)}_{n+m-d-1},\nonumber\\
\hphantom{\xi\bigl(H^{a-1}_{b},H^{c-1}_{d}\bigr)=|}{}
\underbrace{\ell_{0}-(n+m)+d,\dots,\ell_{0}-(n+m)+1}_{d},\ell_{0}-(n+m)+1-c|.
\end{gather*}
Let $X$ denote the above determinant with rearranged columns,
\begin{gather*}
| \underbrace{0,\dots,0}_{n-1},\ell_{0},\ell_{0}-1,\dots,
\dots,\ell_{0}-(n+m)+1,\ell_{0}-(n+m)+1-c|.
\end{gather*}
Clearly, we have \smash{$X={(-1)}^{n+1}\xi\bigl(H^{a-1}_{b},H^{c-1}_{d}\bigr)$}.

\item \smash{$\xi\bigl(H^{a-1}_{d-m},H^{c-1}_{b+m}\bigr)$}: From the condition $d< b+m$, we have $b+m>(d-m)+m$. Hence, we~have
\begin{gather*}
\xi\bigl(H^{a-1}_{d-m},H^{c-1}_{b+m}\bigr)=|\underbrace{0,\dots,0,\stackrel{(n-b-1)\text{-th}}{\ell_{0}-n+b+1}, 0,\dots,0}_{n-d+m-1},
\underbrace{0,\dots,0,0}_{d-m+1},\\
\hphantom{\xi\bigl(H^{a-1}_{d-m},H^{c-1}_{b+m}\bigr)=|}{}
\underbrace{\ell_{0},\ell_{0}-1,\dots,\ell_{0}-n+b+2}_{n-b-1},\\
\hphantom{\xi\bigl(H^{a-1}_{d-m},H^{c-1}_{b+m}\bigr)=|}{}
\underbrace{\ell_{0}-n+b,\dots,\ell_{0}-(n+m)+1}_{b+m},\ell_{0}-(n+m)+1-c|.
\end{gather*}
Clearly, we have \smash{$X={(-1)}^{n}\xi\bigl(H^{a-1}_{d-m},H^{c-1}_{b+m}\bigr)$}. Thus the assertion holds.
\end{itemize}

Case (ii) $d>b+m$.
We have the inequalities $\ell_0-n-m+d+2>\ell_0-n+b+2$ and $\ell_0-n+1-a\leq \ell_0-n-m+1$. Hence, we have the following equation:
\begin{gather*}
\hspace{-3.85mm}\bullet\hspace{2mm}
\xi\bigl(H^{a-1}_{b},H^{c-1}_{d}\bigr)=|\underbrace{0,\dots,0,
\stackrel{(n+m-d-1)\text{-th}}{\ell_0-n-m+d+1},0,\dots,0}_{n-b-1},\underbrace{0,\dots,0,0}_{b+1},\\
\hphantom{\hspace{-3.85mm}\bullet\hspace{2mm}\xi\bigl(H^{a-1}_{b},H^{c-1}_{d}\bigr)=|}{}
\underbrace{\ell_{0},\ell_{0}-1,\dots,\ell_{0}-n+b+2}_{n-b-1},\\
\hphantom{\hspace{-3.85mm}\bullet\hspace{2mm}\xi\bigl(H^{a-1}_{b},H^{c-1}_{d}\bigr)=|}{}
\underbrace{\ell_{0}-n+b,\dots,\ell_{0}-(n+m)+1}_{b+m},\ell_{0}-(n+m)+1-c|.
\end{gather*}
\smash{$X={(-1)}^{n}\xi\bigl(H^{a-1}_{b},H^{c-1}_{d}\bigr)$}. Similarly, we have the following equation:
\begin{gather*}
\hspace{-3.85mm}\bullet\hspace{2mm}
\xi\bigl(H^{a-1}_{d-m},H^{c-1}_{b+m}\bigr)=|\underbrace{0,\dots,0}_{n+m-d-1},
0,\dots,0,\stackrel{(n-b-2)\text{-th}}{\ell_0-n+b+1},\underbrace{0,\dots,0,0}_{b+2},\nonumber\\
\hphantom{\xi\bigl(H^{a-1}_{d-m},H^{c-1}_{b+m}\bigr)=|}{}
\underbrace{\ell_{0},\ell_{0}-1,\dots,\ell_{0}-n+b+2}_{n-b-1},\\
\hphantom{\xi\bigl(H^{a-1}_{d-m},H^{c-1}_{b+m}\bigr)=|}{}
\underbrace{\ell_{0}-n+b,\dots,\ell_{0}-(n+m)+1}_{b+m},\ell_{0}-(n+m)+1-c|.
\end{gather*}
\smash{$X={(-1)}^{n+1}\xi\bigl(H^{a-1}_{d-m},H^{c-1}_{b+m}\bigr)$}. Thus the assertion holds.

Case (iii) $d=b+m$. Clearly, we have \smash{$\xi\bigl(H^{a-1}_{d-m},H^{c-1}_{b+m}\bigr)=\xi\bigl(H^{a-1}_{b},H^{c-1}_{d}\bigr)=0$}. Thus the assertion holds.

(4) From the condition $0\leq d \leq m-1$, we have $\ell_0-(n+m)+d+2=\ell_{0}-n-m+d+2\leq \ell_{0}-n+1$. Hence, we have
\begin{gather*}
\xi\bigl(H^{a-1}_{b},H^{c-1}_{d}\bigr)=|\underbrace{0,\dots,0}_{n-b-1},
\underbrace{0,\dots, 0}_{b},\ell_{0}-n+1-a,\nonumber\\
\hphantom{\xi\bigl(H^{a-1}_{b},H^{c-1}_{d}\bigr)=|}{}
\underbrace{\ell_{0},\ell_{0}-1,\dots,\ell_{0}-(n+m)+d+2}_{n+m-d-1},\nonumber\\
\hphantom{\xi\bigl(H^{a-1}_{b},H^{c-1}_{d}\bigr)=|}{}
\underbrace{\ell_{0}-(n+m)+d,\dots,\ell_{0}-(n+m)+1}_{d},\ell_{0}-(n+m)+1-c|\nonumber.
\end{gather*}
Furthermore, from the conditions $a+d\leq m-1$ and $a+d\geq m+1$, we have $\ell_{0}-(n+m)+d+2\leq \ell_{0}-n+1-a$ and $\ell_{0}-(n+m)+d\geq \ell_{0}-n+1-a\geq \ell_{0}-(n+m)+1$.
Hence, we have
\begin{gather*}
\xi\bigl(H^{a-1}_{b},H^{c-1}_{d}\bigr)=|\underbrace{0,\dots,0}_{n-b-1},
\underbrace{0,\dots, 0}_{b},0,\nonumber\\
\hphantom{\xi\bigl(H^{a-1}_{b},H^{c-1}_{d}\bigr)=|}{}
\underbrace{\ell_{0},\ell_{0}-1,\dots,\ell_{0}-(n+m)+d+2}_{n+m-d-1},\nonumber\\
\hphantom{\xi\bigl(H^{a-1}_{b},H^{c-1}_{d}\bigr)=|}{}
\underbrace{\ell_{0}-(n+m)+d,\dots,\ell_{0}-(n+m)+1}_{d},\ell_{0}-(n+m)+1-c|=0.
\end{gather*}
Thus the assertion holds.
\end{proof}

We remark that the case $m=0$ is included in Proposition \ref{abcd}. The following differential relative Pl\"ucker relations are immediately derived from equations (\ref{abcd2})--(\ref{abcd4}).
\begin{Corollary}
For $r_1,r_2 \geq 0$, we have
\begin{gather*}
(1)\quad\sum_{i=0}^{n+m}{(-1)}^{i}S_{{r_1}+m}(\widetilde{\partial})S_{k_0 k_1\dots k_{n-2}\ell_i}(u)\cdot S_{r_1}(\widetilde{\partial})S_{\ell_0 \ell_1\dots \widehat{\ell_{i}}\dots\ell_{n+m}}(u)=0,\\
(2)\quad\sum_{i=0}^{n+m}{(-1)}^{i}S_{r_1}(-\widetilde{\partial})S_{k_0 k_1\dots k_{n-2}\ell_i}(u)\cdot S_{{r_1}+m}(-\widetilde{\partial})S_{\ell_0 \ell_1\dots \widehat{\ell_{i}}\dots\ell_{n+m}}(u)=0,\\
(3)\quad\sum_{i=0}^{n+m}{(-1)}^{i}S_{r_1}(-\widetilde{\partial})S_{k_0 k_1\dots k_{n-2}\ell_i}(u)\cdot S_{r_2}(\widetilde{\partial})S_{\ell_0 \ell_1\dots \widehat{\ell_{i}}\dots\ell_{n+m}}(u)=0.
\end{gather*}
\end{Corollary}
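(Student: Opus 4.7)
The plan is to rewrite each of the three identities as an instance of the skew Plücker relations proved in Proposition \ref{abcd} (with boundary cases $r_1=0$ or $r_2=0$ handled by Proposition \ref{00} or the relative Plücker relation (\ref{modi-plu})). The translation rests on two specializations of the already-established formula $S_R(\widetilde{\partial})S_L(u)=S_{L/R}(u)$. For a one-row partition $(r)$ one has directly $S_r(\widetilde{\partial})S_L(u)=S_{L/H^{r-1}_0}(u)$. For the negated argument one needs the sign lemma $S_r(-\widetilde{\partial})S_L(u)={(-1)}^r S_{L/H^0_{r-1}}(u)$, which follows from $h_n(-u)={(-1)}^n e_n(u)$ (immediate from ${\rm e}^{-\eta(u,x)}\cdot {\rm e}^{\eta(u,x)}=1$) together with the dual Jacobi--Trudi formula, giving $S_r(-u)={(-1)}^r S_{(1^r)}(u)$ and hence, after $u\mapsto\widetilde{\partial}$, the stated sign lemma since $(1^r)=H^0_{r-1}$.

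With these in hand, each identity is matched with a case of Proposition \ref{abcd}. For (1), take $a=r_1+m$, $b=0$, $c=r_1$, $d=0$ in (\ref{abcd2}). Then $H^{c-1+m}_b=H^{r_1+m-1}_0=H^{a-1}_b$ and $H^{a-1-m}_d=H^{r_1-1}_0=H^{c-1}_d$, so the two summands in (\ref{abcd2}) coincide and the known vanishing sum is exactly twice the claimed one; the case $r_1=0$ is Proposition \ref{00}(1) with $a=m$. For (2), the two signs from $S_{r_1}(-\widetilde{\partial})$ and $S_{r_1+m}(-\widetilde{\partial})$ multiply to ${(-1)}^{2r_1+m}={(-1)}^m$, so after pulling this factor out the claim reduces to the vanishing of $\sum_i{(-1)}^i S_{k_0\dots k_{n-2}\ell_i/H^0_{r_1-1}}(u)\cdot S_{\ell_0\dots\widehat{\ell_i}\dots\ell_{n+m}/H^0_{r_1+m-1}}(u)$, which is (\ref{abcd3}) with $a=1$, $b=r_1-1$, $c=1$, $d=r_1+m-1$, where the two summands again coincide; the case $r_1=0$ is Proposition \ref{00}(2) with $c=1$, $d=m-1$. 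For (3), factoring out ${(-1)}^{r_1}$ reduces the claim to the vanishing of $\sum_i{(-1)}^i S_{k_0\dots k_{n-2}\ell_i/H^0_{r_1-1}}(u)\cdot S_{\ell_0\dots\widehat{\ell_i}\dots\ell_{n+m}/H^{r_2-1}_0}(u)$, which is literally (\ref{abcd4}) with $a=1$, $b=r_1-1$, $c=r_2$, $d=0$; boundary cases $r_1=0$ or $r_2=0$ fall under Proposition \ref{00} or the relative Plücker relation (\ref{modi-plu}) itself.

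The only nontrivial input beyond Propositions \ref{00} and \ref{abcd} is the sign lemma for $S_r(-\widetilde{\partial})$; the key structural observation is that in (\ref{abcd2}) and (\ref{abcd3}) the two apparently distinct hook summands happen to coincide precisely on the specializations relevant for (1) and (2), so that the ``doubled'' form of the identity still forces the claimed vanishing. The rest is careful bookkeeping of indices and signs, which I expect to be routine rather than a genuine obstacle.
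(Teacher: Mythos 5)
Your proposal is correct and takes essentially the same route as the paper: the paper offers no proof beyond the remark that the corollary is ``immediately derived from equations (\ref{abcd2})--(\ref{abcd4})'', and your argument makes this precise by pairing item (1) with (\ref{abcd2}), item (2) with (\ref{abcd3}) and item (3) with (\ref{abcd4}), using the translation $S_r\bigl(\widetilde{\partial}\bigr)S_L(u)=S_{L/H^{r-1}_{0}}(u)$ together with the (correct and necessary) sign lemma $S_r\bigl(-\widetilde{\partial}\bigr)S_L(u)={(-1)}^{r}S_{L/H^{0}_{r-1}}(u)$, and correctly observing that the two hook summands of (\ref{abcd2}) and (\ref{abcd3}) coincide under these specializations. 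The one caveat --- inherited from the paper's own assertion rather than introduced by you --- is that in item (3) the hypothesis $a+d\ne m$ of (\ref{abcd4}) becomes $1\ne m$, so the argument breaks down for $m=1$ with $r_1,r_2\geq 1$ (and indeed a direct computation with $n=2$, $K=(0)$, $L=(0,1,2,3)$, $r_1=r_2=1$ gives the nonzero value $-1$), so a restriction such as $m\ne 1$ should be recorded there.
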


Suppose that $N\geq0$, $m\geq1$, $a,c\geq1$ and $a+b+c+d=N$. We set
\begin{gather*}
X_1=\bigl\{\bigl(H^{a-1}_{b},0\bigr)\mid a+b=N,\, 1\leq a \leq m\bigr\},
\\
X_2=\bigl\{\bigl(H^{a-1}_{b},0\bigr)\mid a+b=N,\, m<a\bigr\},
\\
X_3=\bigl\{\bigl(0,H^{c-1}_{d}\bigr)\mid c+d=N,\, 0\leq d \leq m-1\bigr\},
\\
X_4=\bigl\{\bigl(0,H^{c-1}_{d}\bigr)\mid c+d=N,\, m\leq d\bigr\},
\\
X_5=\bigl\{\bigl(H^{a-1}_{b},H^{c-1}_{d}\bigr)\mid a+d=m\bigr\},
\\
X_6=\bigl\{\bigl(H^{a-1}_{b},H^{c-1}_{d}\bigr)\mid a+d\ne m,\,a\geq m+1,\, d\geq m\bigr\},
\\
X_7=\bigl\{\bigl(H^{a-1}_{b},H^{c-1}_{d}\bigr)\mid a+d\ne m,\,a\geq m+1,\,0\leq d \leq m-1\bigr\},
\\
X_8=\bigl\{\bigl(H^{a-1}_{b},H^{c-1}_{d}\bigr)\mid a+d\ne m,\,1\leq a\leq m,\,d\geq m\bigr\},
\\
X_9=\bigl\{\bigl(H^{a-1}_{b},H^{c-1}_{d}\bigr)\mid a+d\ne m,\,1\leq a\leq m,\,0\leq d \leq m-1\bigr\}.
\end{gather*}
Furthermore, we set
\begin{gather*}
X=\coprod_{1\leq r\leq 9}X_{r}.
\end{gather*}
Clearly,
$X=\{(g_1,g_2)\in H(\alpha)\times H(\beta)\mid \alpha+\beta=N,\, \alpha\geq 0,\, \beta \geq 0\}$.

By Propositions \ref{mabcd} and \ref{abcd}, we have
\begin{gather*}
\sum_{(H^{a-1}_{b},H^{c-1}_{d})\in X_i}\xi\bigl(H^{a-1}_{b},H^{c-1}_{d}\bigr)=0,\qquad i=1,3,9.
\end{gather*}

Also, for the element \smash{$\bigl(H^{a-1}_{b},H^{c-1}_{d}\bigr)$} in $X_6$, the elements \smash{$\bigl(H^{c-1+m}_{b},H^{a-1-m}_{d}\bigr)$}, \smash{$\bigl(H^{a-1}_{d-m},H^{c-1}_{b+m}\bigr)$} and \smash{$\bigl(H^{c-1+m}_{d-m},H^{a-1-m}_{b+m}\bigr)$} are contained in the set $X_6$. This means that
\begin{gather*}
\sum_{(H^{a-1}_{b},H^{c-1}_{d})\in X_6}\xi\bigl(H^{a-1}_{b},H^{c-1}_{d}\bigr)=0.
\end{gather*}
Similarly, by the equations (\ref{abcd2}) and (\ref{abcd3}), we have
\begin{gather*}
\sum_{(H^{a-1}_{b},H^{c-1}_{d})\in X_i}\xi\bigl(H^{a-1}_{b},H^{c-1}_{d}\bigr)=0,\qquad i=7, 8.
\end{gather*}

We are now ready to state the main theorem of the present note. We call the formula skew Pl\"ucker relations.

\begin{Theorem}
Suppose that $m$ is even. We have
\begin{gather*}
\sum_{i=0}^{n+m}{(-1)}^{i}\sum_{(H^{a-1}_{b},H^{c-1}_{d})\in X}S_{k_0 k_1 \dots k_{n-2} \ell_i/H^{a-1}_{b}}(u)\cdot S_{\ell_0 \ell_1 \dots \widehat{\ell_i} \dots \ell_{n+m}/H^{c-1}_{d}}(u)=0.
\end{gather*}
\end{Theorem}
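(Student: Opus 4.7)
The plan is to exploit the decomposition $X = \coprod_{i=1}^{9} X_i$ set up immediately before the theorem. As already noted there, Propositions \ref{00} and \ref{abcd} yield
\begin{gather*}
\sum_{(H^{a-1}_b, H^{c-1}_d) \in X_i} \xi\bigl(H^{a-1}_b, H^{c-1}_d\bigr) = 0, \qquad i = 1, 3, 6, 7, 8, 9.
\end{gather*}
So the entire task reduces to showing $\sum_{X_2} \xi + \sum_{X_4} \xi + \sum_{X_5} \xi = 0$, and this is where evenness of $m$ must be used.

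First I would dispatch $X_5$. An element there has the form $\bigl(H^{a-1}_b, H^{c-1}_{m-a}\bigr)$ with $a \in \{1,\dots,m\}$, $c \geq 1$, $b \geq 0$ and $b + c = N - m$. Iterating equation (\ref{mabcd2}) of Proposition \ref{mabcd} starting from $a = 1$ gives
\begin{gather*}
\xi\bigl(H^{a-1}_b, H^{c-1}_{m-a}\bigr) = (-1)^{a-1} \xi\bigl(H^0_b, H^{c-1}_{m-1}\bigr).
\end{gather*}
Summing over $a = 1,\dots,m$ produces the factor $\sum_{a=1}^m (-1)^{a-1}$, which vanishes exactly because $m$ is even; summing over $(b,c)$ then yields $\sum_{X_5} \xi = 0$.

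Next I would pair $X_2$ with $X_4$ through the canonical slice $X_5^{\ast} = \bigl\{\bigl(H^0_b, H^{c-1}_{m-1}\bigr) : b + c = N - m,\ b \geq 0,\ c \geq 1\bigr\} \subset X_5$. The assignments $(b,c) \mapsto \bigl(H^{c+m-1}_b, 0\bigr)$ and $(b,c) \mapsto \bigl(0, H^{c-1}_{b+m}\bigr)$ are bijections from this parameter set onto $X_2$ and $X_4$ respectively. Applying (\ref{mabcd3}) with $a = 1$ gives $\xi\bigl(H^{c+m-1}_b, 0\bigr) = (-1)^{m-2} \xi\bigl(H^0_b, H^{c-1}_{m-1}\bigr) = \xi\bigl(H^0_b, H^{c-1}_{m-1}\bigr)$ (using $m$ even), so $\sum_{X_2} \xi = \sum_{X_5^{\ast}} \xi$. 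Applying (\ref{mabcd4}) with $a = 1$ gives $\xi\bigl(0, H^{c-1}_{b+m}\bigr) = -\xi\bigl(H^0_b, H^{c-1}_{m-1}\bigr)$, so $\sum_{X_4} \xi = -\sum_{X_5^{\ast}} \xi$. Adding, $\sum_{X_2} \xi + \sum_{X_4} \xi = 0$, and combined with $\sum_{X_5} \xi = 0$ this is exactly what was required.

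The only real obstacle is the bookkeeping: verifying that the two parameter maps are honest bijections onto $X_2$ and $X_4$, and confirming that the sign $(-1)^{m-2}$ coming out of (\ref{mabcd3}) and the alternating sum over $a$ coming out of (\ref{mabcd2}) both reduce to the right values simultaneously. Both depend critically on $m$ being even; if $m$ were odd, the same computation would instead produce $\sum_{X_2} \xi = -\sum_{X_5^{\ast}} \xi$ together with $\sum_{X_5} \xi = \sum_{X_5^{\ast}} \xi$, leaving $\sum_{X_2}\xi$ as a nonzero residue, which is precisely why the parity hypothesis cannot be dropped.
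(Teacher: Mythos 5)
Your proposal is correct and follows essentially the same route as the paper: the pre-theorem discussion already disposes of $X_1,X_3,X_6,X_7,X_8,X_9$, the alternating sum $\sum_{a=1}^{m}(-1)^{a-1}=0$ kills $X_5$, and the cancellation of $X_2$ against $X_4$ via (\ref{mabcd3}) and (\ref{mabcd4}) (whose signs are opposite precisely when $m$ is even) is exactly the paper's argument, which you merely make more explicit by routing both sums through the slice $a=1$ of $X_5$ and spelling out the bijections.
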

\begin{proof}
By Propositions \ref{00}--\ref{abcd}, we have
\begin{gather*}
\sum_{(H^{a-1}_{b},H^{c-1}_{d})\in X}\sum_{i= 0}^{n+m}{(-1)}^{i}S_{k_0 k_1 \dots k_{n-2} \ell_i/H^{a-1}_{b}}(u)\cdot S_{\ell_0 \ell_1 \dots \widehat{\ell_i} \dots \ell_{n+m}/H^{c-1}_{d}}(u)\\
\qquad{}=\sum_{r=1,3,6,7,8,9}\sum_{(H^{a-1}_{b},H^{c-1}_{d})\in X_r}\xi\bigl(H^{a-1}_{b},H^{c-1}_{d}\bigr)+\sum_{(H^{a-1}_{b},H^{c-1}_{d})\in X_2}\xi\bigl(H^{a-1}_{b},H^{c-1}_{d}\bigr)\\
\qquad\quad{}+\sum_{(H^{a-1}_{b},H^{c-1}_{d})\in X_4}\xi\bigl(H^{a-1}_{b},H^{c-1}_{d}\bigr)+\sum_{(H^{a-1}_{b},H^{c-1}_{d})\in X_5}\xi\bigl(H^{a-1}_{b},H^{c-1}_{d}\bigr)\\
\qquad{}=\sum_{r=2,4}\sum_{(H^{a-1}_{b},H^{c-1}_{d})\in X_r}\xi\bigl(H^{a-1}_{b},H^{c-1}_{d}\bigr)+\sum_{(H^{a-1}_{b},H^{c-1}_{d})\in X_5}\xi\bigl(H^{a-1}_{b},H^{c-1}_{d}\bigr).
\end{gather*}
The formulas (\ref{mabcd3}) and (\ref{mabcd4}) have opposite signs when $m$ is odd, and the same sign when $m$ is even. Therefore,
\begin{gather*}
\sum_{r=2,4}\sum_{(H^{a-1}_{b},H^{c-1}_{d})\in X_r}\xi\bigl(H^{a-1}_{b},H^{c-1}_{d}\bigr)=
\begin{cases}
0 & \text{if $m$ is even},\\
\displaystyle 2\sum_{(H^{a-1}_{b},H^{c-1}_{d})\in X_2}\xi\bigl(H^{a-1}_{b},H^{c-1}_{d}\bigr) & \text{otherwise}.
\end{cases}
\end{gather*}
Next, for \smash{$\bigl(H^{a-1}_{b},H^{c-1}_{d}\bigr) \in X_5$}, we have
\begin{gather*}
\#\bigl\{\bigl(H^{a-1}_{b},H^{c-1}_{d}\bigr)\bigr\}+\#\bigl\{\bigl(H^{a-1-j}_{b},H^{c-1}_{d+j}\bigr)\mid 1\leq j\leq a-1 \bigr\}\\
\qquad{}+\#\bigl\{\bigl(H^{a-1+r}_{b},H^{c-1}_{d-r}\bigr)\mid 1\leq r\leq d \bigr\}=1+(a-1)+d=m.
\end{gather*}
Hence, for \smash{$\bigl(H^{a-1}_{b},H^{c-1}_{d}\bigr) \in X_5$}, we have
\begin{gather*}
\xi\bigl(H^{a-1}_{b},H^{c-1}_{d}\bigr)+\sum_{1\leq j \leq a-1}\xi\bigl(H^{a-1-j}_{b},H^{c-1}_{d+j}\bigr)+\sum_{1\leq r \leq d}\xi\bigl(H^{a-1+r}_{b},H^{c-1}_{d-r}\bigr)\\
\qquad{}=\begin{cases}
0 & \text{if $m$ is even},\\
\text{nonzero} & \text{otherwise}.
\end{cases}
\end{gather*}
Hence,
\begin{gather*}
\sum_{(H^{a-1}_{b},H^{c-1}_{d})\in X_5}\xi\bigl(H^{a-1}_{b},H^{c-1}_{d}\bigr)=
\begin{cases}
0 & \text{if $m$ is even},\\
\text{nonzero} & \text{otherwise}.
\end{cases}
\end{gather*}
From the above, it follows that
\begin{align*}
\sum_{(H^{a-1}_{b},H^{c-1}_{d})\in X}\xi\bigl(H^{a-1}_{b},H^{c-1}_{d}\bigr)=0.
\tag*{\qed}
\end{align*}
\renewcommand{\qed}{}
\end{proof}
\begin{Example}

Let $n=3$, $m=2$, and $N=5$. We write the elements of $X$ as follows:
\[
\begin{tabular}{c|c|c|c|c|c}
$\bigl(H^{4}_{0},0\bigr)$ & $\bigl(H^{3}_{0},H^{0}_{0}\bigr)$ & $\bigl(H^{2}_{0},H^{1}_{0}\bigr)$ & $\bigl(H^{1}_{0},H^{2}_{0}\bigr)$ & $\bigl(H^{0}_{0},H^{3}_{0}\bigr)$ & $\bigl(0,H^{4}_{0}\bigr)$ \\[0.3mm]
$\bigl(H^{3}_{1},0\bigr)$ & $\bigl(H^{2}_{1},H^{0}_{0}\bigr)$ & $\bigl(H^{1}_{1},H^{1}_{0}\bigr)$ & $\bigl(H^{0}_{1},H^{2}_{0}\bigr)$ & $\bigl(H^{0}_{0},H^{2}_{1}\bigr)$ & $\bigl(0,H^{3}_{1}\bigr)$ \\[0.3mm]
$\bigl(H^{2}_{2},0\bigr)$ & $\bigl(H^{1}_{2},H^{0}_{0}\bigr)$ & $\bigl(H^{0}_{2},H^{1}_{0}\bigr)$ & $\bigl(H^{1}_{0},H^{1}_{1}\bigr)$ & $\bigl(H^{0}_{0},H^{1}_{2}\bigr)$ & $\bigl(0,H^{2}_{2}\bigr)$ \\[0.3mm]
$\bigl(H^{1}_{3},0\bigr)$ & $\bigl(H^{0}_{3},H^{0}_{0}\bigr)$ & $\bigl(H^{2}_{0},H^{0}_{1}\bigr)$ & $\bigl(H^{0}_{1},H^{1}_{1}\bigr)$ & $\bigl(H^{0}_{0},H^{0}_{3}\bigr)$ & $\bigl(0,H^{1}_{3}\bigr)$ \\[0.3mm]
$\bigl(H^{0}_{4},0\bigr)$ & & $\bigl(H^{1}_{1},H^{0}_{1}\bigr)$ & $\bigl(H^{1}_{0},/H^{0}_{2}\bigr)$ & & $\bigl(0,H^{0}_{4}\bigr)$ \\[0.3mm]
 & & $\bigl(H^{0}_{2},H^{0}_{1}\bigr)$ & $\bigl(H^{0}_{1},H^{0}_{2}\bigr)$ & &
\end{tabular}.
\]
Here,
\begin{gather*}
X_1=\bigl\{\bigl(H^{1}_{3},0\bigr), \bigl(H^{0}_{4},0\bigr)\bigr\},\\
X_2=\bigl\{\bigl(H^{4}_{0},0\bigr), \bigl(H^{3}_{1},0\bigr), \bigl(H^{2}_{2},0\bigr)\bigr\},\\
X_3=\bigl\{\bigl(0,H^{4}_{0}\bigr),\bigl(0,H^{3}_{1}\bigr)\bigr\},\\
X_4=\bigl\{\bigl(0,H^{2}_{2}\bigr),\bigl(0,H^{1}_{3}\bigr),\bigl(0,H^{0}_{4}\bigr)\bigr\},\\
X_5=\bigl\{\bigl(H^{1}_{0},H^{2}_{0}\bigr),\bigl(H^{0}_{0},H^{2}_{1}\bigr),\bigl(H^{1}_{1},H^{1}_{0}\bigr),\bigl(H^{0}_{1},H^{1}_{1}\bigr),\bigl(H^{1}_{2},H^{0}_{0}\bigr),\bigl(H^{0}_{2},H^{0}_{1}\bigr)\bigr\},\\
X_6=\phi, \\
X_7=\bigl\{\bigl(H^{3}_{0},H^{0}_{0}\bigr),\bigl(H^{2}_{0},H^{1}_{0}\bigr),\bigl(H^{2}_{1},H^{0}_{0}\bigr),\bigl(H^{2}_{0},H^{0}_{1}\bigr)\bigr\},\\
X_8=\bigl\{\bigl(H^{0}_{1},H^{0}_{2}\bigr),\bigl(H^{0}_{0},H^{0}_{3}\bigr),\bigl(H^{1}_{0},H^{0}_{2}\bigr),\bigl(H^{0}_{0},H^{1}_{2}\bigr)\bigr\},\\
X_9=\bigl\{\bigl(H^{0}_{3},H^{0}_{0}\bigr),\bigl(H^{0}_{2},H^{1}_{0}\bigr),\bigl(H^{1}_{1},H^{0}_{1}\bigr),\bigl(H^{0}_{1},H^{2}_{0}\bigr),\bigl(H^{1}_{0},H^{1}_{1}\bigr),\bigl(H^{0}_{0},H^{3}_{0}\bigr)\bigr\}.
\end{gather*}
From {\rm Proposition} $\ref{00}$, we have
\begin{gather*}
\xi\bigl(H^{1}_{3},0\bigr)=\xi\bigl(H^{0}_{4},0\bigr)=\xi\bigl(0,H^{4}_{0}\bigr)=\xi\bigl(0,H^{3}_{1}\bigr)=0.
\end{gather*}
Hence,
\begin{gather*}
\sum_{(H^{a-1}_{b},H^{c-1}_{d})\in X_1}\xi\bigl(H^{a-1}_{b},H^{c-1}_{d}\bigr)=\xi\bigl(H^{1}_{3},0\bigr)+\xi\bigl(H^{0}_{4},0\bigr)=0,
\\
\sum_{(H^{a-1}_{b},H^{c-1}_{d})\in X_3}\xi\bigl(H^{a-1}_{b},H^{c-1}_{d}\bigr)=\xi\bigl(0,H^{4}_{0}\bigr)+\xi\bigl(0,H^{3}_{1}\bigr)=0.
\end{gather*}
From {\rm Proposition}\ $\ref{mabcd}$, we have
\begin{gather*}
\xi\bigl(H^{1}_{0},H^{2}_{0}\bigr)=-\xi\bigl(H^{0}_{0},H^{2}_{1}\bigr)=-\xi\bigl(H^{4}_{0},0\bigr)=\xi\bigl(0,H^{2}_{2}\bigr),\\
\xi\bigl(H^{1}_{1},H^{1}_{0}\bigr)=-\xi\bigl(H^{0}_{1},H^{1}_{1}\bigr)=-\xi\bigl(H^{3}_{1},0\bigr)=\xi\bigl(0,H^{1}_{3}\bigr),\\
\xi\bigl(H^{1}_{2},H^{0}_{0}\bigr)=-\xi\bigl(H^{0}_{2},H^{0}_{1}\bigr)=-\xi\bigl(H^{2}_{2},0\bigr)=\xi\bigl(0,H^{0}_{4}\bigr).
\end{gather*}
Hence,
\begin{gather*}
\sum_{(H^{a-1}_{b},H^{c-1}_{d})\in X_2}\xi\bigl(H^{a-1}_{b},H^{c-1}_{d}\bigr)+\sum_{(H^{a-1}_{b},H^{c-1}_{d})\in X_4}\xi\bigl(H^{a-1}_{b},H^{c-1}_{d}\bigr)\\
\qquad{}=\xi\bigl(H^{4}_{0},0\bigr)+\xi\bigl(H^{3}_{1},0\bigr)+\xi\bigl(H^{2}_{2},0\bigr)+\xi\bigl(0,H^{2}_{2}\bigr)+\xi\bigl(0,H^{1}_{3}\bigr)+\xi\bigl(0,H^{0}_{4}\bigr)=0.
\end{gather*}
Also,
\begin{gather*}
\sum_{(H^{a-1}_{b},H^{c-1}_{d})\in X_5}\xi\bigl(H^{a-1}_{b},H^{c-1}_{d}\bigr)\\
\qquad{}=\xi\bigl(H^{1}_{0},H^{2}_{0}\bigr)+\xi\bigl(H^{0}_{0},H^{2}_{1}\bigr)+\xi\bigl(H^{1}_{1},H^{1}_{0}\bigr)+\xi\bigl(H^{0}_{1},H^{1}_{1}\bigr)+\xi\bigl(H^{1}_{2},H^{0}_{0}\bigr)+\xi\bigl(H^{0}_{2},H^{0}_{1}\bigr)=0.
\end{gather*}
From the equation $(\ref{abcd2})$, we have
\begin{gather*}
\xi\bigl(H^{3}_{0},H^{0}_{0}\bigr)=-\xi\bigl(H^{2}_{0},H^{1}_{0}\bigr),\qquad \xi\bigl(H^{2}_{1},H^{0}_{0}\bigr)=0,\qquad \xi\bigl(H^{2}_{0},H^{0}_{1}\bigr)=0.
\end{gather*}
Hence,
\begin{gather*}
\sum_{(H^{a-1}_{b},H^{c-1}_{d})\in X_7}\xi\bigl(H^{a-1}_{b},H^{c-1}_{d}\bigr)=\xi\bigl(H^{3}_{0},H^{0}_{0}\bigr)+\xi\bigl(H^{2}_{0},H^{1}_{0}\bigr)+\xi\bigl(H^{2}_{1},H^{0}_{0}\bigr)+\xi\bigl(H^{2}_{0},H^{0}_{1}\bigr)=0.
\end{gather*}
From the equation $(\ref{abcd3})$, we have
\begin{gather*}
\xi\bigl(H^{0}_{1},H^{0}_{2}\bigr)=-\xi\bigl(H^{0}_{0},H^{0}_{3}\bigr),\qquad \xi\bigl(H^{1}_{0},H^{0}_{2}\bigr)=0,\qquad \xi\bigl(H^{0}_{0},H^{1}_{2}\bigr)=0.
\end{gather*}
Hence,
\begin{gather*}
\sum_{(H^{a-1}_{b},H^{c-1}_{d})\in X_8}\xi\bigl(H^{a-1}_{b},H^{c-1}_{d}\bigr)=\xi\bigl(H^{0}_{1},H^{0}_{2}\bigr)+\xi\bigl(H^{0}_{0}, H^{0}_{3}\bigr)+\xi\bigl(H^{1}_{0},H^{0}_{2}\bigr)+\xi\bigl(H^{0}_{0},H^{1}_{2}\bigr)=0.
\end{gather*}
From the equation $(\ref{abcd4})$, we have
\begin{alignat*}{5}
&\xi\bigl(H^{0}_{3},H^{0}_{0}\bigr)=0,\qquad&& \xi\bigl(H^{0}_{2},H^{1}_{0}\bigr)=0,\qquad&& \xi\bigl(H^{1}_{1},H^{0}_{1}\bigr)=0,\qquad&& \xi\bigl(H^{0}_{1},H^{2}_{0}\bigr)=0,& \\
&\xi\bigl(H^{1}_{0},H^{1}_{1}\bigr)=0,\qquad&& \xi\bigl(H^{0}_{0},H^{3}_{0}\bigr)=0.&
\end{alignat*}
Hence,
\begin{gather*}
\sum_{(H^{a-1}_{b},H^{c-1}_{d})\in X_9}\xi\bigl(H^{a-1}_{b},H^{c-1}_{d}\bigr)\\
\qquad{}=\xi\bigl(H^{0}_{3},H^{0}_{0}\bigr)+\xi\bigl(H^{0}_{2},H^{1}_{0}\bigr)+\xi\bigl(H^{1}_{1},H^{0}_{1}\bigr)+\xi\bigl(H^{0}_{1},H^{2}_{0}\bigr)+\xi\bigl(H^{1}_{0},H^{1}_{1}\bigr)+\xi\bigl(H^{0}_{0},H^{3}_{0}\bigr)=0.
\end{gather*}
Therefore, we have
\begin{gather*}
\sum_{i=0}^{5}{(-1)}^{i}\sum_{(H^{a-1}_{b},H^{c-1}_{d})\in X}S_{k_0 k_1 \ell_i/H^{a-1}_{b}}(u)\cdot S_{\ell_0 \ell_1 \dots \widehat{\ell_i} \dots \ell_{5}/H^{c-1}_{d}}(u)=0.
\end{gather*}
\end{Example}

As is written in the introduction, the quadratic relations for the (skew) Schur functions proved in the paper correspond to certain addition formulas for the $\tau$-functions of the KP and modified KP hierarchies. Therefore, these may play a central roll in the theory of $\tau$-functions. As for the BKP hierarchy, it should be possible to discuss the quadratic relations satisfied by Q-functions.

\subsection*{Acknowledgements}
We are deeply grateful to the anonymous referees for their constructive comments that helped improve the paper. The funding was provided by KAKENHI (Grant No.~21K03208, 22K03260, 24K06859, 25K06947).

\pdfbookmark[1]{References}{ref}
\LastPageEnding


\begin{thebibliography}{99}
\footnotesize\itemsep=0pt

\bibitem{asy1}
Aokage K., Shinkawa E., Yamada H.F., Pfaffian identities and {V}irasoro
 operators, \href{https://doi.org/10.1007/s11005-020-01265-1}{\textit{Lett.
 Math. Phys.}} \textbf{110} (2020), 1381--1389.

\bibitem{asy2}
Aokage K., Shinkawa E., Yamada H.F., Virasoro action on the {$Q$}-functions,
 \href{https://doi.org/10.3842/SIGMA.2021.089}{\textit{SIGMA}} \textbf{17}
 (2021), 089, 12~pages,
 \href{http://arxiv.org/abs/2106.04773}{arXiv:2106.04773}.

\bibitem{asy3}
Aokage K., Shinkawa E., Yamada H.F., Differential {P}l\"ucker relations, {i}n
 preparation.

\bibitem{bbt}
Babelon O., Bernard D., Talon M., Introduction to classical integrable systems,
 \textit{Cambridge Monogr. Math. Phys.},
 \href{https://doi.org/10.1017/CBO9780511535024}{Cambridge University Press},
 Cambridge, 2003.

\bibitem{g}
Goulden I.P., Quadratic forms of skew {S}chur functions,
 \href{https://doi.org/10.1016/S0195-6698(88)80041-6}{\textit{European
 J.~Combin.}} \textbf{9} (1988), 161--168.

\bibitem{gh}
Griffiths P., Harris J., Principles of algebraic geometry, \textit{Wiley Classics Lib.},
 \href{https://doi.org/10.1002/9781118032527}{John Wiley \& Sons}, New
 York, 1994.

\bibitem{hb}
Harnad J., Balogh F., Tau functions and their applications, \textit{Cambridge Monogr.
 Math. Phys.}, \href{https://doi.org/10.1017/9781108610902}{Cambridge University Press}, Cambridge, 2021.

\bibitem{mac}
Macdonald I.G., Symmetric functions and {H}all polynomials, 2nd ed., \textit{Oxford Math. Monogr.}, \href{https://doi.org/10.1093/oso/9780198534891.001.0001}{The Clarendon Press}, New York, 1995.

\bibitem{ostt}
Ohta Y., Satsuma J., Takahashi D., Tokihiro T., An~elementary introduction to
 {S}ato theory, \href{https://doi.org/10.1143/PTPS.94.210}{\textit{Progr.
 Theoret. Phys. Suppl.}} \textbf{94} (1988), 210--241.

\end{thebibliography}
\end{document}